\tikzset{%
  apply style/.code={%
    \tikzset{#1}%
  }
}
\algnewcommand\algorithmicinput{\textbf{Input:}}
\algnewcommand\algorithmicoutput{\textbf{Output:}}
\algnewcommand\Input{\item[\algorithmicinput]}
\algnewcommand\Output{\item[\algorithmicoutput]}
\newcommand{\ALGtikzmarkcolor}{black}% customise this, if you want
\newcommand{\ALGtikzmarkextraindent}{4pt}% customise this, if you want
\newcommand{\ALGtikzmarkverticaloffsetstart}{-.5ex}% customise this, if you want
\newcommand{\ALGtikzmarkverticaloffsetend}{-.5ex}% customise this, if you want
\newcounter{ALG@tikzmark@tempcnta}
\newcommand\ALG@tikzmark@start{%
	\global\let\ALG@tikzmark@last\ALG@tikzmark@starttext%
	\expandafter\edef\csname ALG@tikzmark@\theALG@nested\endcsname{\theALG@tikzmark@tempcnta}%
	\tikzmark{ALG@tikzmark@start@\csname ALG@tikzmark@\theALG@nested\endcsname}%
	\addtocounter{ALG@tikzmark@tempcnta}{1}%
}
\def\ALG@tikzmark@starttext{start}
\newcommand\ALG@tikzmark@end{%
	\ifx\ALG@tikzmark@last\ALG@tikzmark@starttext
	% ignore this, the block was opened then closed directly without any other blocks in between (so just a \State basically)
	% don't draw a vertical line here
	\else
	\tikzmark{ALG@tikzmark@end@\csname ALG@tikzmark@\theALG@nested\endcsname}%
	\tikz[overlay,remember picture] \draw[\ALGtikzmarkcolor] let \p{S}=($(pic cs:ALG@tikzmark@start@\csname ALG@tikzmark@\theALG@nested\endcsname)+(\ALGtikzmarkextraindent,\ALGtikzmarkverticaloffsetstart)$), \p{E}=($(pic cs:ALG@tikzmark@end@\csname ALG@tikzmark@\theALG@nested\endcsname)+(\ALGtikzmarkextraindent,\ALGtikzmarkverticaloffsetend)$) in (\x{S},\y{S})--(\x{S},\y{E});%
	\fi
	\gdef\ALG@tikzmark@last{end}%
}
\apptocmd{\ALG@beginblock}{\ALG@tikzmark@start}{}{\errmessage{failed to patch}}
\pretocmd{\ALG@endblock}{\ALG@tikzmark@end}{}{\errmessage{failed to patch}}
\newcommand{\appref}[1]{{\hyperref[proof:#1]{\appsymb}}}
\newcommand{\appLink}[1]{{\hyperref[#1]{\appsymb}}}
\newtheorem{theorem}{Theorem}[section]
\newtheorem{lemma}[theorem]{Lemma}
\newtheorem{observation}[theorem]{Observation}
\newtheorem{corollary}[theorem]{Corollary}
\newtheorem{proposition}[theorem]{Proposition}
\theoremstyle{definition}
\newtheorem{definition}[theorem]{Definition}
\crefname{observation}{Observation}{Observations}
\Crefname{observation}{Observation}{Observations}
\newcommand{\problemn}[1]{\textsc{#1}}
\DeclarePairedDelimiterX{\abs}[1]{\lvert}{\rvert}{#1}
\newcommand{\RR}{\mathbb{R}}
\newcommand{\TE}{\mathcal{E}}
\newcommand{\TG}{\mathcal{G}}
\newcommand{\bigO}{{O}}
\newcommand*{\kddpaperbetweennesshat}{Lemma~2.8}
\newcommand*{\iversonian}[1]{\left[#1\right]_{\mathds{1}}}
\newcommand*{\natinterval}[1]{\left[#1\right]}
\newcommand*{\SP}{\#P}
\newcommand*{\SPC}{\SP-hard}
\newcommand*{\SPCness}{\SP-hardness}
\newcommand*{\sourcevertex}{s}
\newcommand*{\midvertex}{v}
\newcommand*{\targetvertex}{z}
\newcommand*{\subpath}[3]{#1\left[#2, #3\right]}
\newcommand*{\sbullet}[1][.5]{\mathbin{\vcenter{\hbox{\scalebox{#1}{$\bullet$}}}}}
\newcommand*{\subpathblank}{\sbullet[.75]}
\newcommand*{\pathsuff}[2]{\subpath{#1}{#2}{\subpathblank}}
\newcommand*{\pathpref}[2]{\subpath{#1}{\subpathblank}{#2}}
\newcommand*{\pathconcat}{\oplus}
\newcommand*{\spath}[2]{temporal #1\text{-}#2\text{-path}}
\newcommand*{\spaths}[2]{\spath{#1}{#2}s}
\newcommand*{\szpath}{\spath{$s$}{$z$}}
\newcommand*{\szpaths}{\spaths{$s$}{$z$}}
\newcommand*{\swalk}[2]{temporal #1\text{-}#2\text{-walk}}
\newcommand*{\swalks}[2]{\swalk{#1}{#2}s}
\newcommand*{\szwalk}{\swalk{$s$}{$z$}}
\newcommand*{\szwalks}{\swalks{$s$}{$z$}}
\newcommand*{\timespan}{T}
\newcommand*{\temptuple}{(V, \TE, \timespan)}
\newcommand*{\appset}{V\times\natinterval{T}}
\newcommand*{\vapp}[2]{(#1, #2)}
\newcommand*{\vappdef}{\vapp{\midvertex}{t}}
\newcommand*{\vappsource}{\vapp{\sourcevertex}{0}}
\newcommand*{\trans}[1]{\overset{#1}{\rightarrow}}
\newcommand*{\pathset}{\walkset}%{\mathcal{P}} % Kind of a hack, for now
\newcommand*{\walkset}{\mathcal{W}}
\newcommand*{\walksetrest}[1]{\walkset_{#1}}
\newcommand*{\walksetrestdef}{\walksetrest{\sourcevertex}}
\newcommand*{\Prefcompat}{Prefix-compatibility}
\newcommand*{\prefcompat}{prefix-compatibility}
\newcommand*{\prefcomple}{prefix-compatible}
\newcommand*{\Prefopt}{Prefix-optimality}
\newcommand*{\prefopt}{prefix-optimality}
\newcommand*{\prefoptal}{prefix-optimal}
\newcommand*{\Prefext}{Prefix-exchangeability}
\newcommand*{\prefext}{prefix-exchangeability}
\newcommand*{\prefextable}{prefix-exchangeable}
\newcommand*{\btw}{C_B}
\newcommand*{\btwset}[1]{\btw^{#1}}
\newcommand*{\btwsetdef}{\btwset{\walkset}}
\newcommand*{\totbtw}{\hat{C}_B}
\newcommand*{\totbtwset}[1]{\totbtw^{#1}}
\newcommand*{\totbtwsetdef}{\totbtwset{\walkset}}
\newcommand*{\totbtwtext}{total temporal betweenness centrality}
\newcommand*{\Pairdeptext}{Pair dependency}
\newcommand*{\pairdeptext}{pair dependency}
\newcommand*{\Temppairdeptext}{Temporal \pairdeptext{}}
\newcommand*{\temppairdeptext}{temporal \pairdeptext{}}
\newcommand*{\cumdep}{cumulative dependency}
\newcommand*{\cumdeps}{cumulative dependencies}
\newcommand*{\cumtdep}{temporal \cumdep{}}
\newcommand*{\Appdeptext}{Appearance dependency}
\newcommand*{\appdeptext}{appearance dependency}
\newcommand*{\Edgedeptext}{Edge dependency}
\newcommand*{\edgedeptext}{edge dependency}
\newcommand*{\walks}[1]{\text{walks}(#1)}
\newcommand*{\sigmares}[1]{\sigma^{#1}}
\newcommand*{\sigmaresdef}{\sigmares{\walkset}}
\newcommand*{\dirpredset}[2]{\operatorname{Pre}^{#1}_{#2}}
\newcommand*{\dirpredsetdef}{\dirpredset{\pathset}{\sourcevertex}}
\newcommand*{\dirsuccset}[2]{\operatorname{Succ}^{#1}_{#2}}
\newcommand*{\dirsuccsetdef}{\dirsuccset{\pathset}{\sourcevertex}}
\newcommand*{\predgraph}[1]{G^{\dirpredset{}{}}_{#1}}
\newcommand*{\predgraphdef}{\predgraph{\sourcevertex}}
\newcommand*{\pairdep}[3]{\delta^{#1}_{#2#3}}
\newcommand*{\pairdepdef}{\pairdep{\pathset}{\sourcevertex}{\targetvertex}}
\newcommand*{\deltapoint}[2]{\pairdep{#1}{#2}{\bullet}}
\newcommand*{\deltapointdef}{\deltapoint{\pathset}{\sourcevertex}}
\newcommand*{\appdepdef}{\pairdep{\pathset}{\sourcevertex}{\midvertex}(\midvertex, t)}
\newcommand*{\countwalksfn}{\textsc{Count-Walks}}
\newcommand*{\relverarr}{R}
\newcommand*{\critfn}{c}
\newcommand*{\critfnopt}{\critfn^{*}_{\sourcevertex}}
\newcommand*{\Critfntext}{Cost function}
\newcommand*{\critfntext}{cost function}
\newcommand*{\critfnstext}{\critfntext{}s}
\newcommand*{\copt}{$\critfn$-optimal}
\newcommand*{\pathenc}[1]{\langle #1\rangle}
\newcommand*{\dist}{\textsc{cur-best}}
\title{Towards Classifying the Polynomial-Time Solvability of Temporal Betweenness Centrality\thanks{An extended abstract of this work appears in the proceedings of the \emph{47th International Workshop on Graph-Theoretic Concepts in Computer Science (WG~2021)}~\cite{RMNN21}.}}
\author[1]{Maciej Rymar\thanks{Partially supported by the DFG,
project MATE (NI 369/17).}}
\author[2]{Hendrik Molter\thanks{Supported by the DFG,
project MATE (NI 369/17), and by the ISF, grant No.~1070/20. Main part of this work was done while the author was affiliated with TU~Berlin.}}
\author[1]{Andr\'e Nichterlein}
\author[1]{Rolf Niedermeier}
\date{ }
\affil[1]{TU Berlin, Algorithmics and Computational Complexity, Berlin, Germany,  
\texttt{\{m.rymar,andre.nichterlein,rolf.niedermeier\}@tu-berlin.de}}
\affil[2]{Department of Industrial Engineering and Management, Ben-Gurion~University~of~the~Negev, 
Beer-Sheva, 
Israel, 
\texttt{molterh@post.bgu.ac.il}}
\begin{document}

\maketitle

\begin{abstract}
In static graphs, the betweenness centrality of a graph vertex measures 
how many times this vertex is part of a shortest path between 
any two graph vertices. Betweenness centrality is efficiently computable and 
it is a fundamental tool in network science. Continuing and extending previous
work, we study the efficient computability of betweenness centrality in 
\emph{temporal} graphs (graphs with fixed vertex set but time-varying 
arc sets). Unlike in the static case, there 
are numerous natural notions of being a ``shortest'' temporal path (walk). 
Depending on which notion is used, it was already observed 
that the problem is~\SPC{} in some cases while polynomial-time solvable 
in others. In this conceptual work, we contribute towards classifying 
what a ``shortest path (walk) concept'' has to fulfill in
order to gain polynomial-time computability of temporal betweenness centrality.

\bigskip

\noindent\textbf{Keywords:} temporal graphs, temporal paths and walks, network science, network centrality measures, counting complexity.
\end{abstract}

\section{Introduction}
Network science is a central pillar of data science. It relies 
on spotting and analyzing important network (graph) properties. 
Betweenness centrality, introduced by  \citet{freeman_set_1977} 
and made a practical tool of high relevance by \citet{Bra01}, is a key instrument in this area, in particular in the context of social network analysis.

Informally, the \emph{betweenness centrality} of a graph vertex~$v$ correlates to the probability that~$v$ is visited by a randomly chosen shortest path. 
With the advent of investigating
dynamically changing network structures and, thus,
the growing interest in temporal graphs, studying
concepts of 
temporal betweenness centrality and their algorithmic complexity became very popular 
over the recent years~\cite{afrasiabi17,alsayed2015betweenness,BMNR20,gunturi2017scalable,habiba2007betweenness,kim_temporal_2012,SML21,tang10,tsalouchidou_2020,williams16}.

The temporal graphs we are considering have fixed vertex set and edge set(s)
that change over discrete time steps. A temporal path in such a graph has to respect time, that is, the path has to traverse edges at non-decreasing time steps. 
The study of \emph{temporal} betweenness centrality is significantly richer than in static graphs
since in temporal graphs the term ``shortest path'' may have numerous 
different but natural interpretations. Indeed, the 
shortest transfer from a start vertex to a target vertex
may even be a walk (and not just a path) and there is 
also an intensive study on shortest-path and shortest-walk computations
in temporal graphs~\cite{wu_efficient_2016, himmel_efficient_2020, HMZ20,
xuan_computing_2003}. We refrain 
from going into the details here but refer to our predecessor 
work~\cite{BMNR20} for a more extensive discussion. What is important 
to note, however, is that the complexity of 
temporal betweenness centrality computation, which essentially boils down 
to a counting problem, crucially depends on the concept used. More specifically,
the complexity may vary from polynomial-time solvable (with different 
polynomial degrees) to \SPC{}. To systematically investigate this 
issue and to develop a better understanding of when one has 
to expect such a huge jump in computational complexity 
is the main motivation of our work.

The by far closest reference point for our work is a 
previous paper from our group~\cite{BMNR20}. It also 
surveys the literature roughly till the year~2020. Since then,
\citet{SML21} studied a continuous-time scenario and betweenness 
based on shortest paths, while we focus on discrete time.
Our work has a significantly stronger conceptual objective 
than \citet{BMNR20} had. So our classification results comprise
the results there. They are based on
coining the concept of \emph{prefix-compatible} temporal walks. These walks can 
be counted in polynomial time and thus the corresponding 
temporal betweenness centrality value can be computed in polynomial time.
To this end, we provide simple (still tunable) polynomial-time algorithms 
that apply to a whole class of temporal betweenness centrality 
problems. Moreover, we indicate that slightly relaxing from prefix-compatibility
typically already yields \SPCness{}.
%Due to space constraints, detailed proofs (marked with a \appsymb) are
%deferred to the appendix.

\section{Preliminaries}\label{sec:prelims}

The fundamental concept we use in this work is a \emph{temporal
graphs}.
%\begin{definition}[Temporal Graph]
A directed \emph{temporal graph}~$\TG$ is a triple $\temptuple$ such that 
$V$ is a set of vertices,
$\TE \subseteq \{(u,v,t) \mid u,v \in V,u\neq v,t \in [\timespan]\}$ is a set of
time arcs, and $T\in \mathbb{N}$, where $[\timespan] \coloneqq \{1,\dots,\timespan\}$ is a set of time steps; see \cref{fig:def-illustration-paths} for an illustration.
%\end{definition} 

Throughout this work, let $n:=|V|$ and $M:=|\TE|$. 
We call $V \times [\timespan]$ the set of (possible) \emph{vertex appearances}.
We consider directed temporal graphs as temporal paths and walks are implicitly directed because of the ascending time labels. 
%\begin{definition}[Transition]
We call a time arc $e = (v,w,t)$ also the \emph{transition} from $v$ to $w$ at time step $t$. 
We call $v$ the starting point and $w$ the endpoint of the transition. 
%\end{definition}
Using this, we can now define temporal walks and temporal paths; see \cref{fig:def-illustration-paths} for an illustration.

\tikzstyle{edge-label}=[align=center,inner sep=1pt,rounded corners,text=black!70]
\tikzstyle{pathOne}=[line join=round, draw=green!50!black,line width=5pt,line cap=round]
\tikzstyle{pathTwo}=[line join=round, draw=red,line width=5pt,line cap=round]
\tikzstyle{pathThree}=[line join=round, draw=blue,line width=5pt,line cap=round]
\newcommand{\fullGraph}[6]{
	\foreach[count=\i] \x / \y / \opt in {0/0/{label=below:{$s$}}, 1/0/{#1}, 2/0/{#2}, 3/0/{label=below:{$v$}}, 1.5/1/{#3}, 1.5/-1/{#4}, 4/1/{#5}, 4/-1/{#6},5/0/{label=below:{$z$}}}
	{
		\node (v\i) at (\x,\y) [draw=black,circle,thick,inner sep=1pt,fill=black,apply style/.expand once=\opt] {};
% 		\node (v\i) at (\x,\y) [draw=black,circle,thick,inner sep=1pt,fill=black,apply style/.expand once=\opt,text=white] {\i};
	}
	\begin{pgfonlayer}{background}
		\foreach \i / \j / \txt in {1/2/{1}, 1/5/{1}, 1/6/{2}, 2/3/{3}, 3/4/{5}, 4/8/{6,9}, 5/7/{4}, 6/4/{5}, 8/7/{7}, 8/9/{10}, 7/4/{8}, 7/9/{11}}
		{
			\draw [->,>=stealth,color=black!75] (v\i) edge node[fill=white,edge-label] {\small\txt}  (v\j);
		}
	\end{pgfonlayer}
}
\newcommand{\standardGraph}{\fullGraph{}{}{}{}{}{}}
\newcommand{\labelledGraph}{\fullGraph{label=below:{$a$}}{label=below:{$b$}}{label=above:{$c$}}{label=below:{$d$}}{label=above:{$e$}}{label=below:{$f$}}}

\begin{figure}[t]\centering
	\begin{tikzpicture}[xscale=1.25,yscale=1.2]
		\standardGraph
		\begin{pgfonlayer}{background}
			\begin{scope}[opacity=.15, transparency group]
				\draw [pathOne] (v1.center) \foreach \i in {5,7,9} {-- (v\i.center) };  
			\end{scope}
			\begin{scope}[opacity=.15, transparency group]
				\draw [pathThree] (v1.center) \foreach \i in {6,4,8,7,4,8,9} {-- (v\i.center) };  
			\end{scope}
		\end{pgfonlayer}
	\end{tikzpicture}
	\hfil
	\begin{tikzpicture}[xscale=1.25,yscale=1.2]
		\standardGraph
			\begin{scope}[opacity=.15, transparency group]
				\draw [pathOne] (v1.center) \foreach \i in {5,7,4,8,9} {-- (v\i.center) };  
			\end{scope}
			\begin{scope}[opacity=.15, transparency group]
				\draw [pathTwo] (v1.center) \foreach \i in {2,3,4,8,9} {-- (v\i.center) };  
			\end{scope}
			\begin{scope}[opacity=.15, transparency group]
				\draw [pathThree] (v1.center) \foreach \i in {6,4,8,9} {-- (v\i.center) };  
			\end{scope}
	\end{tikzpicture}

	\caption{
		Our running example for a temporal graph~$\TG$ with 9 vertices and 13 time-arcs (the outgoing arc from vertex~$v$ denotes two time-arcs at time steps 6 and 9).
		The number(s) on the arcs denote the time steps.
		\emph{Left:} Highlighted are the unique shortest $s$-$z$-path (top path in green) in~$\TG$ and a fastest $s$-$z$-walk (bottom walk in blue, $v$ and its successor appear twice in the walk).
		\emph{Right:} Three foremost $s$-$z$-paths in~$\TG$ are highlighted ($\TG$ has two more foremost $s$-$z$-walks; for visibility not highlighted).
	}
	\label{fig:def-illustration-paths}
\end{figure}
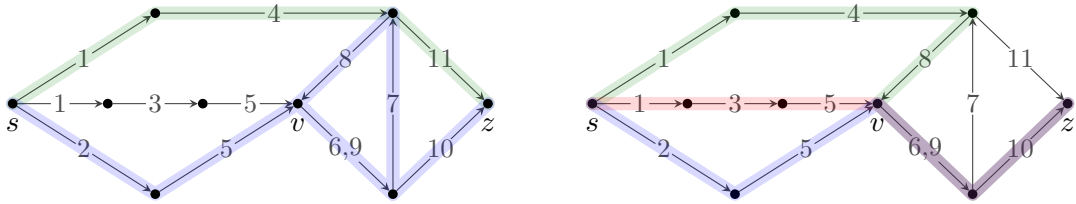

\begin{definition}[Temporal Walk]
A \emph{temporal walk} $W$ 
% on a temporal graph $\TG$ from vertex $s$ to vertex $z$ 
is an ordered sequence of transitions $(e_1,\dots,e_k)\in \TE^k$ such that for each $i\in[k-1]$
the endpoint of $e_i$ is the starting point of $e_{i+1}$ and $t_i \leq
t_{i+1}$, where $t_i$ and $t_{i+1}$ are the time labels of transitions $e_i$ and $e_{i+1}$, respectively. 
% A temporal walk is
% \emph{strict} if $t(e_i) < t(e_{i+1})$ for each $i\in\{1,\dots,k-1\}$. 
The length of~$W$ is~$\operatorname{length}(W):=k$.
\end{definition}
Let $W=(e_1,\dots,e_k)$ be a temporal walk. 
We call $W$ a \emph{\swalk{$s$}{$z$}} if $e_1=(s,v,t)$ and $e_k=(w,z,t')$ for some $v,w$ and some $t,t'$ and we call $W$ a \emph{\swalk{$s$}{$(z,t')$}} if~$e_1=(s,v,t)$ and $e_k=(w,z,t')$ for some $v,w$ and some $t$. 

A temporal walk may visit the same vertex more than once. 
In contrast to that, a temporal \emph{path} visits each vertex at most once. 
This is analogous to static graphs.
In contrast to the static setting, there are several canonical notions of ``optimal'' temporal walks. 
The three most important ones~\cite{xuan_computing_2003} are 
\begin{itemize}
\item \emph{shortest}
temporal walks, which are temporal walks between two vertices that uses a
minimum number of time arcs, 
\item \emph{fastest} temporal walks, which are temporal
walks between two vertices with a minimum difference between the time steps of
the first and last transition used by the walk, and 
\item \emph{foremost} temporal
walks, which are temporal walks between two vertices with a minimum time step
on their last transition. 
\end{itemize}
See \cref{fig:def-illustration-paths} for an illustration.
The corresponding optimal temporal paths are defined analogously. 
We remark that in the case of ``shortest'', every shortest temporal walk is in fact a temporal path, similarly to the static case.\footnote{In fact, all optimal temporal path concepts (we are aware of) where path counting and computing the betweenness centrality can be done in polynomial time have this property, ensuring that optimal walks are indeed paths.} 
However, for ``fastest'' and ``foremost'' temporal walks this is generally not the case.

For readability, we use the notation $v\trans{t}w$ instead of the triple $(v,w,t)$. 
If a temporal walk~$W$ contains the transition $v\trans{t}w$, then we say that $W$ \emph{visits} (or \emph{goes through}) vertex appearance $(w,t)$ (in \cref{fig:def-illustration-paths} (left) the blue walk visits~$(v,5)$ and~$(v,8)$). 
Let $P = v_0 \trans{t_1} v_1 \trans{t_2} \ldots \trans{t_{\ell-1}} v_{\ell-1}
\trans{t_\ell} v_\ell$ be a temporal path. For any $0 \leq i < j \leq \ell$,
we write (if $i=0$, then we define the following for $t_0=0$):
\begin{align*}
	\subpath{P}{\vapp{v_i}{t_i}}{\vapp{v_j}{t_j}} &\coloneqq v_i \trans{t_{i+1}}
	v_{i+1} \trans{t_{i+2}} \ldots \trans{t_j} v_j, \\
	\pathpref{P}{\vapp{v_j}{t_j}} &\coloneqq v_0 \trans{t_1} v_1 \trans{t_2}
	\ldots \trans{t_j} v_j, \\
	\pathsuff{P}{\vapp{v_i}{t_i}} &\coloneqq v_i \trans{t_{i+1}} v_{i+1}
	\trans{t_{i+2}} \ldots \trans{t_\ell} v_\ell.
\end{align*}
We use an analogous notation for temporal walks. Note that for a non-strict temporal
walk $W$ the above notation may not be well-defined since the same vertex
appearance $\vappdef$ may appear more than once in $W$. In this case
we define $\subpath{W}{\vapp{v_i}{t_i}}{\vapp{v_j}{t_j}}$ to be the subwalk of
$W$ from the first appearance of $u\trans{t_i}v_i$ for any $u$ to the last
appearance of  $u\trans{t_j}v_j$ for any $u$. The cases of
$\pathpref{P}{\vapp{v_j}{t_j}}$ and $\pathsuff{P}{\vapp{v_i}{t_i}}$ are handled
analogously.

Furthermore, we use $\pathconcat$ to denote \emph{concatenations} of temporal
walks, that is, let $W, W'$ be two temporal walks such that $W$ ends in $v$ and $W'$
starts in $v$. Let $t$ be the time label on the last transition of $W$ and $t'$
the label on the first transition of $W'$. Then if $t'\ge t$% 
% ($t'>t$ in the
% strict case)
, we denote with~$W\pathconcat W'$ the concatenation of~$W$ and~$W'$.
% , however this is not a problem since we
% shall later see that this never happens on the walks that we consider.

Given a temporal graph $\TG$, we denote by $\walks{\TG}$ the set of all temporal
walks in $\TG$. Subsequently, we will need to consider the successors (or
dually, the predecessors) of each vertex appearance on temporal walks
in some~$\walkset\subseteq\walks{\TG}$.
\begin{definition}[Direct predecessor set, direct successor set]
	Let~$\TG=\temptuple$ be a temporal graph and $\walkset\subseteq\walks{\TG}$ be
	a subset of its temporal walks. Fix a source vertex~$s\in V$.
	Let~$\walksetrestdef\subseteq\walkset$ be the set of temporal walks
	in~$\walkset$ that start in~$s$. Now let~$\vapp{w}{t'}\in\appset$ be any vertex appearance.
	Then~$\dirpredsetdef(w, t')$ is the set of all direct predecessors
	of~$\vapp{w}{t'}$ on temporal walks in~$\walksetrestdef$. Formally,
	\begin{align*}
		\dirpredsetdef(w, t') \coloneqq 
		&{} \left\{\vapp{v}{t}\in \appset \mid \exists W\in\walksetrestdef\colon u \trans{t}v\trans{t'}w\in W \right\}  \\
		&{} \cup \left\{\vappsource\mid \exists W\in\walksetrestdef\colon s\trans{t'}w\in W \right\}.
	\end{align*}
	The set~$\dirsuccsetdef(v, t)$ of successors of a vertex appearance~$\vapp{v}{t}$ is the ``inverse'' of the predecessor relation,  formally,
	\begin{align*}
		\dirsuccsetdef(v, t) \coloneqq \left\{\vapp{w}{t'} \mid \vapp{v}{t}\in\dirpredsetdef(w, t')\right\}.
	\end{align*}
\end{definition}
Clearly, the direct predecessor sets induce a relation over vertex appearances.
We use this to define the following directed graph. We remark that this graph is similar to a so-called static
expansion~\cite{Zsc+20,kempe_connectivity_2002,wu_efficient_2016} that is tailored to a specific source vertex, see \cref{fig:pred-graph} for an example.
\begin{figure}
	\centering
	\begin{tikzpicture}[xscale=1.1,yscale=1.05]
		{\labelledGraph}
		\begin{pgfonlayer}{background}
% 			\begin{scope}[opacity=.15, transparency group]
% 				\draw [pathOne] (v1.center) \foreach \i in {5,7,9} {-- (v\i.center) };  
% 			\end{scope}
			\begin{scope}[opacity=.15, transparency group]
				\draw [pathOne] (v1.center) \foreach \i in {2,3,4,8,9} {-- (v\i.center) };  
			\end{scope}
			\begin{scope}[opacity=.15, transparency group]
				\draw [pathTwo,line width=4pt] (v1.center) \foreach \i in {6,4,8,9} {-- (v\i.center) };  
			\end{scope}
			\begin{scope}[opacity=.15, transparency group]
				\draw [pathThree,line width=9pt] (v1.center) \foreach \i in {6,4,8,7,4,8,9} {-- (v\i.center) };  
			\end{scope}
		\end{pgfonlayer}
	\end{tikzpicture}
% 	\hfil
	\hspace{-0.5cm}
	\begin{tikzpicture}[xscale=0.9,yscale=1.2]
		\foreach[count=\i] \x / \y / \opt in {0/0/{label=below:{\small$(s,0)$}}, 2/0/{label=below:{\small$(d,2)$}}, 5/0/{label=below:{\small$(v,5)$}}, 6/0/{label=below:{\small$(f,6)$}}, 7/0/{label=below:{\small$(e,7)$}}, 8/0/{label=below:{\small$(v,8)$}}, 9/0/{label=below:{\small$(f,9)$}}, 10/0/{label=below:{\small$(z,10)$}}, 1/1/{label=above:{\small$(a,1)$}}, 3/1/{label=above:{\small$(b,3)$}}}
		{
			\node (v\i) at (\x,\y) [draw=black,circle,thick,inner sep=1pt,fill=black,apply style/.expand once=\opt] {};
		}
		\begin{pgfonlayer}{background}
			\foreach \i / \j / \style in {1/2/{}, 2/3/{}, 3/4/{}, 4/5/{}, 5/6/{}, 6/7/{}, 7/8/{}, 3/8/{bend left=40}, 1/9/{}, 9/10/{}, 10/3/{}}
			{
				\draw [->,>=stealth,color=black!75,apply style/.expand once=\style] (v\i) edge (v\j);
			}
			\begin{scope}[opacity=.15, transparency group]
				\draw [pathOne] (v1.center) \foreach \i in {9,10,3} {-- (v\i.center) } edge [bend left=41] (v8.center);  
			\end{scope}
			\begin{scope}[opacity=.15, transparency group]
				\draw [pathTwo,line width=4pt] (v1.center) \foreach \i in {2,3} {-- (v\i.center) } edge [bend left=41] (v8.center);  
			\end{scope}
			\begin{scope}[opacity=.15, transparency group]
				\draw [pathThree,line width=9pt] (v1.center) \foreach \i in {2,...,8} {-- (v\i.center) };  
			\end{scope}
		\end{pgfonlayer}
	\end{tikzpicture}
	\hfil
	\caption{An illustration of the predecessor graph. 
		\emph{Left:} A temporal graph~$\TG=\temptuple$ with a set~$\walkset$ of three highlighted walks. 
		\emph{Right:} The predecessor graph~$\predgraphdef$ of~$s$ with respect to~$\walkset$. 
		The $x$-coordinate of the vertices correspond to the time of the vertex appearance and the walks corresponding to~$\walkset$ are highlighted.
	}
	\label{fig:pred-graph}
\end{figure}

\begin{definition}[Predecessor graph]\label{def:pred-graph} %\todo{discuss
% static expansion} 
	Let~$\TG=\temptuple$ be a temporal graph and $\walkset\subseteq\walks{\TG}$ be a subset of its temporal walks. 
	Fix a source vertex~$s\in V$. 
	Then~$\predgraphdef \coloneqq (U, A)$ is the \emph{predecessor graph (of~$s$, with respect to~$\walkset$)}, where
	\begin{align*}
		U & \coloneqq \{\vappsource\} \cup \{\vapp{w}{t'} \mid \dirpredsetdef(w, t') \neq \emptyset\}, \text{ and} \\
		A & \coloneqq \{(\vappdef, \vapp{w}{t'}) \mid \vappdef\in\dirpredsetdef(w,
		 t')\}.
	\end{align*}
\end{definition}

We next introduce some notation and definitions for temporal walk counting and
temporal betweenness.
\begin{definition}%[Walk counting]
\label{def:walkcounts}
	Let~$\TG=\temptuple$ be a temporal graph and $\walkset\subseteq\walks{\TG}$ be
	a subset of its temporal walks.
	Let~$s, v, z\in V$ and $t\in[\timespan]$. Then,
	\begin{itemize}
		\item $\sigmaresdef_{sz}$ is the number of~\szwalks{} in~$\walkset$ that start in~$s$ and end in~$z$.%5We have that~$\sigmaresdef_{ss} = \iversonian{\text{$\walkset$ contains the trivial zero-length walk from~$s$ to~$s$}}$; That last part isn't actually true for walks (although for our subsets induced by the ``good'' functions, it will be)
		
		\item $\sigmaresdef_{sz}(v)$ is the number of~\szwalks{} in~$\walkset$ that go
		through the vertex~$v$. Furthermore,~$\sigmaresdef_{sz}(z) =
		\sigmaresdef_{sz}(s) = \sigmaresdef_{sz}$ and~$\sigmaresdef_{ss}(s) = \sigmaresdef_{ss}$;
		
		\item $\sigmaresdef_{sz}(v, t)$ is the number of~\szwalks{} in~$\walkset$ that go through the vertex appearance~$\vapp{v}{t}$. 
		Furthermore,~$\sigmaresdef_{sz}(s, 0) = \sigmaresdef_{sz}(s) =
		\sigmaresdef_{sz}$ and~$\sigmaresdef_{sz}(s, t') = 0$ for all~$t'\in\natinterval{T}$.
	\end{itemize}
\end{definition}

Based on this definition we can define the notions of
dependency of vertices on other vertices, similar to how it was done by
\citet{Bra01} in the static case. We remark that the notions for
the temporal setting introduced in the following are very similar to the ones
used by \citet{BMNR20}. We give them again here for completeness and since we
adapt them for general sets of temporal walks.
\begin{definition}[\Pairdeptext{}, \cumdep{}]\label{def:dependency}
	Let~$\TG$ be a temporal graph and $\walkset\subseteq\walks{\TG}$~be a subset of
	its temporal walks. Then,
	\begin{align*}
 		\pairdepdef(v) & \coloneqq  \begin{cases} 0, & \text{if } \sigmaresdef_{sz} = 0, \\ \frac{\sigmaresdef_{sz}(v)}{\sigmaresdef_{sz}}, & \text{otherwise;}\end{cases} &
 		\deltapointdef(v) & \coloneqq  \sum_{z\in V} \pairdepdef(v)
 	\end{align*}
	are the \emph{\pairdeptext{}} of~$s$ and~$z$ on~$v$ and the \emph{\cumdep{}}
	of~$s$ on~$v$, respectively.
\end{definition}
In other words,~$\pairdepdef(v)$ is the fraction of~\szwalks{} that
go through~$v$. Intuitively, the higher this fraction is, the more
important~$v$ is to the connectivity of~$s$ and~$z$ in the graph.
Furthermore,~$\deltapointdef(v)$ is the \cumdep{} of~$s$ on~$v$ for all
possible destinations. 
These notions can be used to define temporal betweenness
centrality, which intuitively captures how \emph{all} other vertices depend
on~$v$ for their connectivity.
 \begin{definition}[Temporal betweenness centrality]
	\label{def:genbtw}
	 Let~$\TG$ be a temporal graph and~$\walkset\subseteq\walks{\TG}$ be a subset
	 of its temporal walks. %\todo{RN: Ditto!}.
	Then, for any vertex~$v\in V$, 
	\begin{align*}
		\btwsetdef(v) \coloneqq \sum_{s\neq v\neq z}\pairdepdef(v) && \text{\emph{and}} && \totbtwsetdef(v) \coloneqq \sum_{s, z\in V}\pairdepdef(v)
	\end{align*}
	are the \emph{temporal betweenness centrality}~$\btwsetdef(v)$ of~$v$ and \emph{\totbtwtext{}}~$\totbtwsetdef(v)$ of~$v$ (with respect to $\walkset$).
\end{definition}
If the set of walks~$\walkset$ in question is clear from the context, then we omit the~$\walkset$.
The main reason behind mainly using \totbtwtext{} instead of the standard
temporal betweenness in the following is that it simplifies some of our proofs as it works well
with our definition of \cumdep{}:
\begin{observation}\label{thm:totbtwcumdep}
	For any vertex~$v\in V$, $\totbtwsetdef(v) = \sum_{s\in V} \deltapointdef(v).$
\end{observation}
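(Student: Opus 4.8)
The plan is to observe that \cref{thm:totbtwcumdep} is an immediate consequence of the two defining equations, so the whole argument amounts to unfolding definitions and regrouping a finite sum. Concretely, by \cref{def:genbtw} the left-hand side is the double sum $\totbtwsetdef(v) = \sum_{s,z\in V}\pairdepdef(v)$, whose index set is the full Cartesian product $V\times V$, while by \cref{def:dependency} the cumulative dependency is the partial sum $\deltapointdef(v) = \sum_{z\in V}\pairdepdef(v)$ obtained by fixing the source $s$ and letting the destination $z$ range over $V$.

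First I would rewrite the double sum over $V\times V$ as an iterated sum, grouping the terms by their first coordinate: $\sum_{s,z\in V}\pairdepdef(v) = \sum_{s\in V}\bigl(\sum_{z\in V}\pairdepdef(v)\bigr)$. Since $V$ is finite and every summand $\pairdepdef(v)$ is a well-defined real number (the degenerate case $\sigmaresdef_{sz}=0$, which in particular covers $s=z$, being handled by the first branch of \cref{def:dependency}), this regrouping requires nothing beyond finite additivity. Then, for each fixed $s\in V$, I would recognise the inner sum $\sum_{z\in V}\pairdepdef(v)$ as exactly $\deltapointdef(v)$ by \cref{def:dependency} and substitute, obtaining $\totbtwsetdef(v) = \sum_{s\in V}\deltapointdef(v)$, as claimed.

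There is essentially no obstacle here; the only point worth stating explicitly is that, unlike the ordinary temporal betweenness $\btwsetdef$, the total variant $\totbtwsetdef$ imposes no exclusion (the cases $s=v$ and $z=v$ are permitted), so the index set of its defining sum is genuinely all of $V\times V$, which coincides with the union over $s\in V$ of the index sets $\{s\}\times V$ of the cumulative dependencies. This is precisely what makes the collapse of the two definitions go through without any correction terms, and it is also the reason \totbtwtext{} is more convenient than the standard temporal betweenness for the subsequent arguments.
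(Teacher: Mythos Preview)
Your argument is correct and is precisely the unfolding-of-definitions that the paper intends: the statement is recorded as an \emph{Observation} without proof, and your regrouping of the double sum $\sum_{s,z\in V}\pairdepdef(v)$ into $\sum_{s\in V}\deltapointdef(v)$ is the one-line justification. One minor quibble: your parenthetical that $\sigmaresdef_{sz}=0$ ``in particular covers $s=z$'' need not hold in general (nothing forbids temporal walks from $s$ back to $s$ in $\walkset$), but this has no bearing on the argument since the case split in \cref{def:dependency} makes $\pairdepdef(v)$ well-defined regardless.
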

We have that $\totbtwsetdef(v)$ and $\btwsetdef$ are tightly related: 
\begin{observation}[\kddpaperbetweennesshat{} in \citet{BMNR20}]\label{thm:totbetweennesssame}
	For any vertex~$v\in V$, %it holds that
	\begin{align*}
		\btwsetdef(v) = \totbtwsetdef(v) - \sum_{w \in V} \left(\iversonian{\sigmaresdef_{vw} > 0} + \iversonian{\sigmaresdef_{wv} > 0}\right) + \iversonian{\sigmaresdef_{vv} > 0}.
	\end{align*}
\end{observation}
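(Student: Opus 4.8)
The plan is a direct computation based on inclusion--exclusion over the index sets of the two sums. By \cref{def:genbtw}, $\totbtwsetdef(v)=\sum_{s,z\in V}\pairdepdef(v)$ runs over \emph{all} ordered pairs $(s,z)\in V\times V$, while $\btwsetdef(v)=\sum_{s\neq v\neq z}\pairdepdef(v)$ is the same sum restricted to pairs with $s\neq v$ and $z\neq v$. Hence the pairs contributing to the difference $\totbtwsetdef(v)-\btwsetdef(v)$ are exactly those with $s=v$ or $z=v$, and since the set $\{(v,z):z\in V\}$ and the set $\{(s,v):s\in V\}$ overlap only in $(v,v)$, inclusion--exclusion gives
\begin{align*}
	\totbtwsetdef(v) - \btwsetdef(v) = \sum_{w\in V}\pairdep{\pathset}{v}{w}(v) + \sum_{w\in V}\pairdep{\pathset}{w}{v}(v) - \pairdep{\pathset}{v}{v}(v).
\end{align*}

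Next I would evaluate the three pair dependencies on the right-hand side using the degenerate-case conventions fixed in \cref{def:walkcounts}, namely $\sigmaresdef_{vw}(v)=\sigmaresdef_{vw}$ (a walk from $v$ counts as going through $v$), $\sigmaresdef_{wv}(v)=\sigmaresdef_{wv}$ (a walk to $v$ counts as going through $v$), and $\sigmaresdef_{vv}(v)=\sigmaresdef_{vv}$. Feeding $\sigmaresdef_{vw}(v)=\sigmaresdef_{vw}$ into the case distinction of \cref{def:dependency}: if $\sigmaresdef_{vw}=0$ then $\pairdep{\pathset}{v}{w}(v)=0=\iversonian{\sigmaresdef_{vw}>0}$, and otherwise $\pairdep{\pathset}{v}{w}(v)=\sigmaresdef_{vw}/\sigmaresdef_{vw}=1=\iversonian{\sigmaresdef_{vw}>0}$; so in all cases $\pairdep{\pathset}{v}{w}(v)=\iversonian{\sigmaresdef_{vw}>0}$. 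The same argument with the other two conventions yields $\pairdep{\pathset}{w}{v}(v)=\iversonian{\sigmaresdef_{wv}>0}$ and $\pairdep{\pathset}{v}{v}(v)=\iversonian{\sigmaresdef_{vv}>0}$.

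Substituting these three evaluations into the displayed identity and rearranging for $\btwsetdef(v)$ gives exactly
\begin{align*}
	\btwsetdef(v) = \totbtwsetdef(v) - \sum_{w\in V}\left(\iversonian{\sigmaresdef_{vw}>0}+\iversonian{\sigmaresdef_{wv}>0}\right)+\iversonian{\sigmaresdef_{vv}>0},
\end{align*}
as claimed. I do not expect a genuine obstacle here; this is essentially a bookkeeping statement, and the only point that needs care is applying the boundary conventions of \cref{def:walkcounts} correctly so that the degenerate pair dependencies (which would otherwise be of the indeterminate form $0/0$, or simply be excluded from $\btwsetdef$) collapse cleanly to Iverson brackets, together with getting the inclusion--exclusion overlap term $\pairdep{\pathset}{v}{v}(v)$ and its sign right.
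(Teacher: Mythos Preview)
Your proof is correct and is exactly the natural bookkeeping argument one would expect: split off the pairs with $s=v$ or $z=v$ by inclusion--exclusion, then use the endpoint conventions of \cref{def:walkcounts} to reduce each degenerate pair dependency to an Iverson bracket via \cref{def:dependency}. The present paper does not actually supply its own proof of this observation---it is simply quoted from \citet{BMNR20}---but your argument is the standard one and matches what is done there.
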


We use a straightforward
generalization of \cref{def:dependency} to vertex appearances.
\begin{definition}[\Temppairdeptext, \cumtdep]\label{def:tempdep}
	Let~$\TG$ be a temporal graph and~$\walkset$ be a subset of its walks. We
	define
		\begin{align*}
		\pairdepdef(v, t) & \coloneqq  \begin{cases} 0, & \text{if } \sigmaresdef_{sz} = 0 \\ \frac{\sigmaresdef_{sz}(v, t)}{\sigmaresdef_{sz}}, & \text{otherwise}\end{cases} & 
		\deltapointdef(v, t) & \coloneqq  \sum_{z\in V} \pairdepdef(v, t)
	\end{align*}
	are the \emph{\temppairdeptext{} of $s$ and $z$ on $\vapp{v}{t}$} and the
	\emph{\cumtdep{} of $s$ on $\vapp{v}{t}$}, respectively.
	Additionally, the special case~$\appdepdef$ denotes the \emph{\appdeptext{} of~$s$ on~$\vapp{v}{t}$}.
\end{definition}

We can observe a simple relation between dependencies on vertices and
dependencies on vertex appearances.
\begin{observation}\label{thm:depappdepsame}
	For any vertex $v\in V$, it holds that
		\begin{align*}
		\pairdepdef(v) & =  \sum_{t\in\natinterval{T}}\pairdepdef(v, t) & \text{\emph{and}} &&
		\deltapointdef(v) & =  \sum_{t\in\natinterval{T}}\deltapointdef(v, t)\text{.}
	\end{align*}
\end{observation}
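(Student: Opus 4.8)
The plan is to reduce both equalities to a single identity at the level of walk \emph{counts}, namely
\[
	\sigmaresdef_{sz}(v) \;=\; \sum_{t\in\natinterval{T}}\sigmaresdef_{sz}(v, t)
	\qquad\text{for all } s, v, z\in V,
\]
and then pass to dependencies by dividing by $\sigmaresdef_{sz}$ and then summing over $z$.

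For the counting identity: if $\sigmaresdef_{sz}=0$ both sides vanish, so assume $\sigmaresdef_{sz}>0$. A \szwalk{} $W$ in~$\walkset$ is counted by $\sigmaresdef_{sz}(v,t)$ precisely when it traverses a transition into $v$ at time $t$, \ie when it goes through the vertex appearance $\vapp{v}{t}$; and it is counted by $\sigmaresdef_{sz}(v)$ precisely when it goes through $\vapp{v}{t}$ for \emph{some} $t$. Hence the identity asserts exactly that the sets $\{\,W \mid W\text{ goes through }\vapp{v}{t}\,\}$, for $t\in\natinterval{T}$, partition the \szwalks{} in~$\walkset$ through $v$ --- equivalently, that each such $W$ uses a single appearance of $v$. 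I would justify this from the fact that the walk sets relevant to this paper (temporal paths, and more generally the \prefcomple{} walk sets) consist of walks visiting each vertex at most once. The boundary case $v=s$ is not of this form but follows directly from the conventions of \cref{def:walkcounts}: there $\sigmaresdef_{sz}(s)=\sigmaresdef_{sz}(s,0)=\sigmaresdef_{sz}$ while $\sigmaresdef_{sz}(s,t)=0$ for every $t\ge 1$, so the identity holds once one reads the sum over $t$ as ranging over $\{0\}\cup\natinterval{T}$, \ie as including the source appearance $\vappsource$.

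Granting the counting identity, the first claimed equality is immediate from \cref{def:dependency} and \cref{def:tempdep}: if $\sigmaresdef_{sz}=0$ then $\pairdepdef(v)=0=\sum_{t}\pairdepdef(v,t)$, and otherwise dividing the counting identity by $\sigmaresdef_{sz}$ yields $\pairdepdef(v)=\sum_{t\in\natinterval{T}}\pairdepdef(v,t)$. The second equality then follows by summing the first over all $z\in V$ and exchanging the two finite sums:
\begin{align*}
	\deltapointdef(v)
	&= \sum_{z\in V}\pairdepdef(v)
	= \sum_{z\in V}\sum_{t\in\natinterval{T}}\pairdepdef(v,t) \\
	&= \sum_{t\in\natinterval{T}}\sum_{z\in V}\pairdepdef(v,t)
	= \sum_{t\in\natinterval{T}}\deltapointdef(v,t).
\end{align*}

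The one genuinely delicate point is the partition/uniqueness step: a general temporal walk may revisit $v$ at several distinct time steps --- as the blue fastest walk does in \cref{fig:def-illustration-paths} --- and is then counted by $\sigmaresdef_{sz}(v,t)$ for more than one $t$. The identity must therefore be read relative to walk sets in which this does not occur (which includes all the \prefcomple{} sets used in this work), or, equivalently, relative to a reading of $\sigmaresdef_{sz}(v)$ that counts visits of walks to $v$ rather than walks through $v$. Everything else is a mechanical unfolding of the definitions.
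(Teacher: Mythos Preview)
The paper leaves this as an unproved observation, so there is no paper proof to compare against; your derivation via the counting identity $\sigmaresdef_{sz}(v)=\sum_t\sigmaresdef_{sz}(v,t)$, followed by division by $\sigmaresdef_{sz}$ and summation over~$z$, is the natural argument and is clearly what the authors have in mind.

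Your caveat at the end is exactly right, and in fact the situation is slightly worse than you indicate. \Prefcompat{} together with finiteness only rules out revisiting the same vertex \emph{appearance} (\cref{thm:novisitedtwice}); it does not prevent a \copt{} walk from visiting the same vertex~$v$ at two distinct times $t_1\neq t_2$. Foremost temporal walks, listed as \prefcomple{} in \cref{prop:exps}, exhibit precisely this behaviour: the blue walk in \cref{fig:def-illustration-paths} (left) is also foremost (it arrives at~$z$ at time~$10$, which is optimal) and passes through both $\vapp{v}{5}$ and $\vapp{v}{8}$. So your parenthetical claim that ``the \prefcomple{} walk sets consist of walks visiting each vertex at most once'' is not correct as stated, and for such~$\walkset$ the observation genuinely fails under the literal reading of \cref{def:walkcounts}. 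Your alternative reading --- interpret $\sigmaresdef_{sz}(v)$ as counting visits rather than walks --- is the clean repair; the paper is silently imprecise on this point, but since the downstream algorithm in \cref{sec:tempbet} works directly at the level of vertex appearances, the imprecision does not propagate into the main results.
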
%}

\section{\Prefcompat}\label{sec:prefcomp}
%\appendixsection{sec:prefcomp}
In this section, our goal is to find
an easy-to-understand-and-use property for optimality concepts for temporal
walks and paths that is sufficient for polynomial-time solvability of
% \begin{itemize}
%   \item 
  (1) counting optimal temporal walks and
%   \item 
  (2) computing the temporal betweenness with respect to that optimality concept for temporal walks.
% \end{itemize}
We call this property ``\prefcompat''. 
Intuitively, a class of temporal walks is \prefcomple{} if prefixes of optimal temporal walks are also optimal (``\prefopt'') and prefixes of optimal temporal walks can be exchanged (``\prefext'').
To formally define optimality concepts for temporal walks, we use \critfnstext. 
% Now we start talking about ``\critfnstext.''
\begin{definition}[\Critfntext]
	\label{def:critfn}
	Let~$\walkset$ be the set of all temporal walks in a temporal graph~$\TG$.  A function of the form
% 	\begin{align*}
 		$\critfn : \walkset\rightarrow\RR\cup\{\infty\}$ %\\
% 	\end{align*}
	is a \emph{\critfntext{}}. 
\end{definition}

We remark that for this work we assume that the \critfntext{} can be computed in constant time, which turns out to be a valid assumption for many optimality concepts.
When considering \critfnstext{} that need polynomial time to be evaluated, this polynomial factor would form an extra multiplicative term in our running time results.

Let~$\critfn$ be a \critfntext{} and let $\TG = \temptuple$ be a temporal graph.
Fix a source~$s\in V$.
Then, for every vertex appearance~$(v,t)\in V\times [T]$ we define~$\critfnopt(v, t)$ to be the minimum value of~$\critfn$ assumed over all~\swalks{$s$}{$\vappdef$}. 
That is, we have~$\critfnopt(v, t) = \min_{s\text{-}\vappdef\text{-walk }W} \{\critfn(W)\}$. 
If the minimum is not defined or there is no~\swalk{$s$}{$\vappdef$}, then let~$\critfnopt(v, t) \coloneqq \infty$. 
Similarly, we define the optimal $\critfn$-values for the vertices $v\in V$ as $\critfnopt(v) \coloneqq \min_{t\in\natinterval{\timespan}}\{\critfnopt(v, t)\}$.
We call a \swalk{$s$}{$\vappdef$}~$W$~\emph{\copt{}} if we have~$\critfn(W) = \critfnopt(v, t) < \infty$. 
Similarly, we call a~\swalk{$s$}{$v$}~$W$~\emph{\copt{}} if we have~$\critfn(W) = \critfnopt(v) < \infty$.
Observe that this notion of \copt{} walks is very general and allows to capture essentially all natural walk concepts, see \cref{fig:critfn-illustration} for some examples.

\begin{figure}[t]\centering
	\begin{tikzpicture}[xscale=1.25,yscale=1.2]
		\standardGraph
		\begin{pgfonlayer}{background}
			\begin{scope}[opacity=.15, transparency group]
				\draw [pathOne] (v1.center) \foreach \i in {5,7,9} {-- (v\i.center) };  
			\end{scope}
			\begin{scope}[opacity=.15, transparency group]
				\draw [pathTwo,line width=4pt] (v1.center) \foreach \i in {6,4,8,9} {-- (v\i.center) };  
			\end{scope}
			\begin{scope}[opacity=.15, transparency group]
				\draw [pathThree,line width=9pt] (v1.center) \foreach \i in {6,4,8,7,4,8,9} {-- (v\i.center) };  
			\end{scope}
		\end{pgfonlayer}
	\end{tikzpicture}
	\hfil
	\begin{tikzpicture}[xscale=1.25,yscale=1.2]
		\standardGraph
		\begin{pgfonlayer}{background}
% 			\begin{scope}[opacity=.15, transparency group]
% 				\draw [pathOne,line width=3pt] (v1.center) \foreach \i in {2,3,4} {-- (v\i.center) };  
% 			\end{scope}
			\begin{scope}[opacity=.15, transparency group]
				\draw [pathTwo] (v1.center) \foreach \i in {2,3,4,8,7,4,8,9} {-- (v\i.center) };  
			\end{scope}
		\end{pgfonlayer}
	\end{tikzpicture}

	\caption{Examples for \copt{} walks for various~$c$.
		\emph{Left:} The top (green) $s$-$z$-path is the shortest $s$-$z$-walk, that is, $c(W)$ is the number of time arcs in the walk~$W$.
		The blue walk and the purple path are the two fastest $s$-$z$-walks in the graph; here $c(W)$ is the difference of the time steps of the first and last time arc in the walk~$W$.
		\emph{Right:} Highlighted is the only \emph{2-restless} $s$-$z$-walk, that is, the difference between the time steps of two consecutive time arcs is at most two~\cite{HMZ20,himmel_efficient_2020}.
		This could be encoded in~$\critfn$ as follows: for a walk~$W = (e_1, \ldots, e_k)$ we have~$c(W) = 1$ if~$t(e_i) + 2 \ge t(e_{i+1})$ for all~$i \in [k-1]$ and~$c(W) = \infty$ otherwise. 
% 		Instead of giving cost~$1$ to all ``feasible'' walks, one could of course consider various optimizations like shortest, fastest, foremost, ....
		Notably, in general it is NP-hard to decide whether such a 2-restless $s$-$z$-\emph{path} exists~\cite{HMZ20}, but for walks even the optimization variants (shortest, fastest, $\ldots$) are polynomial-time solvable~\cite{himmel_efficient_2020}. 
	}
	\label{fig:critfn-illustration}
\end{figure}
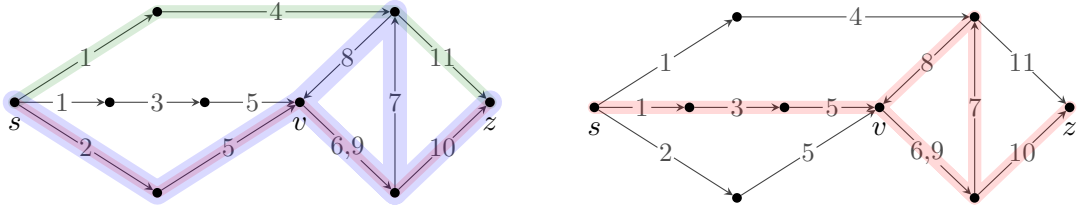

We can now define the set of walks in a temporal graph that is optimal with respect to some \critfntext{}~$\critfn$ in a straightforward way:
\begin{definition}[Induced set of optimal temporal walks]
	Let~$\critfn$ be a \critfntext{} and let $\TG = \temptuple$ be a temporal graph. 
	For~$s, z\in V$,
	let~$\walkset_{sz}$ be the set of all~\szwalks{}
	in~$\TG$. Then
	\begin{align*}
		\walkset^{(\critfn)} \coloneqq \bigcup_{s, z \in V}\{W\in\walkset_{sz}\mid
		\critfn(W) < \infty \wedge \critfn(W) = \critfnopt(z) \}
	\end{align*}
	is the \emph{induced set of optimal temporal walks
	(of~$\critfn$)}.
\end{definition}

From now on, we introduce the two properties for \critfnstext{} that we need to obtain \prefcompat; in \cref{fig:prefix-illustration} we illustrate that (restless) fastest paths do not satisfy one of these.
\begin{figure}[t]\centering
	\begin{tikzpicture}[xscale=1.25,xscale=0.9]
		\foreach[count=\i] \x / \y / \opt in {0/0/{label=below:{$s$}}, 1/0/{}, 2/0/{}, 3/0/{label=below:{$v$}}, 1.5/1/{}, 1.5/-1/{}, 4/1/{}, 4/-1/{},5/0/{label=below:{$z$}}}
		{
			\node (v\i) at (\x,\y) [draw=black,circle,thick,inner sep=1pt,fill=black,apply style/.expand once=\opt] {};
% 			\node (v\i) at (\x,\y) [draw=black,circle,thick,inner sep=1pt,fill=black,apply style/.expand once=\opt,text=white] {\i};
		}
		\begin{pgfonlayer}{background}
			\foreach \i / \j / \txt in {1/2/{1}, 1/5/{\textbf 3}, 1/6/{2}, 2/3/{3}, 3/4/{5}, 4/8/{6,9}, 5/7/{4}, 6/4/{5}, 8/7/{7}, 8/9/{10}, 7/4/{\textbf 5}, 7/9/{\textbf 8}}
			{
				\draw [->,>=stealth,color=black!75] (v\i) edge node[fill=white,edge-label] {\txt}  (v\j);
			}
			\begin{scope}[opacity=.15, transparency group]
				\draw [pathTwo] (v1.center) \foreach \i in {5,7,4} {-- (v\i.center) };  
			\end{scope}
			\begin{scope}[opacity=.15, transparency group]
				\draw [pathThree] (v1.center) \foreach \i in {6,4,8,7,9} {-- (v\i.center) };  
			\end{scope}
		\end{pgfonlayer}
	\end{tikzpicture}
	\hfil
	\begin{tikzpicture}[xscale=1.25,xscale=0.9]
		\foreach[count=\i] \x / \y / \opt in {0/0/{label=below:{$s$}}, 1/0/{}, 2/0/{}, 3/0/{label=below:{$v$}}, 1.5/1/{}, 1.5/-1/{}, 4/1/{}, 4/-1/{},5/0/{label=below:{$z$}}}
		{
			\node (v\i) at (\x,\y) [draw=black,circle,thick,inner sep=1pt,fill=black,apply style/.expand once=\opt] {};
% 			\node (v\i) at (\x,\y) [draw=black,circle,thick,inner sep=1pt,fill=black,apply style/.expand once=\opt,text=white] {\i};
		}
		\begin{pgfonlayer}{background}
			\foreach \i / \j / \txt in {1/2/{1}, 1/5/{\textbf 2}, 1/6/{2}, 2/3/{3}, 3/4/{5}, 4/8/{6,9}, 5/7/{4}, 6/4/{5}, 8/7/{7}, 8/9/{10}, 7/4/{\textbf 5}, 7/9/{\textbf 8}}
			{
				\draw [->,>=stealth,color=black!75] (v\i) edge node[fill=white,edge-label] {\txt}  (v\j);
			}
			\begin{scope}[opacity=.15, transparency group]
				\draw [pathTwo] (v1.center) \foreach \i in {5,7,4} {-- (v\i.center) };  
			\end{scope}
			\begin{scope}[opacity=.15, transparency group]
				\draw [pathThree] (v1.center) \foreach \i in {6,4,8,7,9} {-- (v\i.center) };  
			\end{scope}
		\end{pgfonlayer}
	\end{tikzpicture}

	\caption{Counter examples showing that (restless) fastest paths neither satisfy \prefopt{} nor \prefext{}. 
		Inhere, three time steps (indicated by bold numbers) are updated in our standard temporal graph. 
		\emph{Left:} The blue path (starting at time step 2) is the unique fastest $3$-restless (cf.\ \cref{fig:critfn-illustration}) $s$-$z$-path with travel time~$8-2=6$; it is not \prefoptal{} as the red path is a faster $s$-$(v,5)$-path (travel time~$5-2=3$ vs.\ $5-3=2$).
		\emph{Right:} The blue path (starting at time step 2) is the fastest $s$-$z$-path with travel time~$8-2=6$; it is not \prefextable{} as the red path is also a fastest $s$-$(v,5)$-path but the green prefix cannot be replaced with the red path as the resulting walk would not be a path.
	}
	\label{fig:prefix-illustration}
\end{figure}
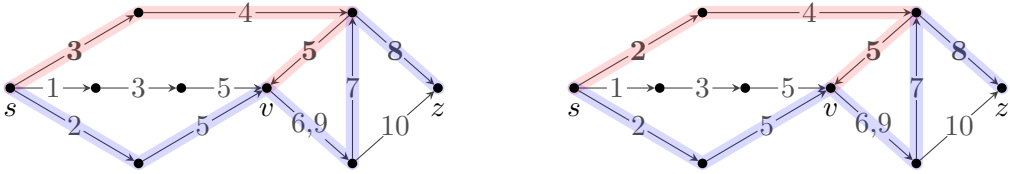
We start with ``\prefopt{}'', which intuitively states that prefixes of optimal temporal walks are also optimal.
\begin{definition}[\Prefopt]
	Let~$\critfn$ be a \critfntext{} and let~$\walkset^{(\critfn)}$ be a set of
	optimal temporal walks in a temporal graph~$\TG$ that is induced by~$\critfn$.
	Then,  $\critfn$~is \emph{\prefoptal{}} if for every
	temporal walk~$W\in\walkset^{(\critfn)}$ and for every vertex appearance~$\vappdef\in
	W$ it holds that 
% 	$\pathpref{W}{\vappdef}\in
% 	\walkset^{(\critfn)}$, that is, 
	$\critfn(\pathpref{W}{\vappdef}) = \critfnopt(v,
	t)$.
\end{definition}

Note that we do not require that the prefixes be optimal temporal walks \emph{to a vertex}, so the temporal walk~$\critfn(\pathpref{W}{\vappdef})$ is not required to be in the induced set of optimal temporal walks~$\walkset^{(\critfn)}$.
If~$\critfn$ is clear from the context and there is no danger of confusion, then we drop the superscript~$(\critfn)$.

The second property we introduce is ``\prefext{}''. It intuitively states that a
prefix of an optimal temporal walk can be exchanged by certain other temporal
walks.
\begin{definition}[\Prefext]
	Let~$\critfn$ be a \critfntext{} and let~$\walkset^{(\critfn)}$ be a set of
	optimal temporal walks in a temporal graph~$\TG$ that is induced by~$\critfn$.
	Then, $\critfn$ is \emph{\prefextable} if for every vertex appearance~$\vappdef\in\appset$ such
	that there exist~$s, z\in V$ for which there is a~\szwalk{}~$W\in
	\walkset^{(\critfn)}$ going through~$\vappdef$ and for
	every~\swalk{$s$}{$\vappdef$}~$W'$ with $\critfn(W') = \critfnopt(v,
	t)$ it holds that
	$W'\pathconcat\pathsuff{W}{\vappdef}\in \walkset^{(\critfn)}$, that is,
	$\critfn(W'\pathconcat\pathsuff{W}{\vappdef}) = \critfnopt(z)$.
\end{definition}
% \todo{define $\pathconcat$}
In other words, if there is an optimal \szwalk{}~$W$ going through~$\vappdef$,
then all~\copt{}~\swalks{$s$}{$\vappdef$} can be substituted for the first part of~$W$ to get a~\copt{}~\szwalk{}.

It is convenient to combine the two main properties into one.
\begin{definition}[\Prefcompat]
	Let~$\critfn$ be a \critfntext{}. 
	Then~$\critfn$ is \emph{\prefcomple} if it is both \prefoptal{} and \prefextable{}.
\end{definition}

\subsection{Examples of \prefcomple{} \critfnstext{}} \label{ssec:examples}
We exemplarily show that five well-known optimality concepts for temporal paths
and walks~\cite{wu_efficient_2016,xuan_computing_2003,himmel_efficient_2020} can
be expressed by \prefcomple{} \critfnstext{}. The optimality concepts we consider here are \emph{foremost temporal walks} and
\emph{shortest temporal paths}. %, see \cref{def:optwalks}. 
We further consider
\emph{shortest fastest temporal paths}, which are temporal paths that are
shortest among all fastest temporal paths and \emph{shortest restless temporal
walks}~\cite{himmel_efficient_2020}, which are the shortest temporal walks among
all temporal walks where the difference of the time labels of two consecutive
transitions is bounded by some constant.
Lastly, we consider \emph{strict prefix-foremost temporal paths}~\cite{wu_efficient_2016}. A temporal path is strict prefix-foremost if it is foremost and every prefix is
also foremost and all transitions have increasing time labels.
\begin{proposition}%[\appref{prop:exps}]
\label{prop:exps}
The \critfnstext{} describing the following optimality concepts are
\prefcomple{}:
	\begin{itemize}
  \item foremost temporal walk,
  \item shortest temporal path,
  \item shortest fastest temporal path, 
  \item shortest restless temporal walks, and
  \item strict prefix-foremost temporal path.
\end{itemize}
\end{proposition}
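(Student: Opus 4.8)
The plan is to go through the five optimality concepts one at a time: for each I exhibit an explicit \critfntext{}~$\critfn$, verify that its induced set~$\walkset^{(\critfn)}$ is exactly the intended family of optimal walks or paths, and then establish \prefopt{} and \prefext{}. Two ingredients recur. First, the concatenations I need are always legal: if $W'$ is a \swalk{$s$}{$\vappdef$} and~$W$ is a walk passing through~$\vappdef$, then the last transition of~$W'$ has time~$t$ while the first transition of $\pathsuff{W}{\vappdef}$ has time~$\ge t$, so $W'\pathconcat\pathsuff{W}{\vappdef}$ is defined. Second, I always prove \prefopt{} by the same cut-and-paste contradiction: were some prefix $\pathpref{W}{\vappdef}$ of an optimal \szwalk{}~$W$ to satisfy $\critfn(\pathpref{W}{\vappdef}) > \critfnopt(v, t)$, splicing a $\critfn$-optimal \swalk{$s$}{$\vappdef$} in front of $\pathsuff{W}{\vappdef}$ yields a~\szwalk{} cheaper than~$W$, contradicting $\critfn(W) = \critfnopt(z)$. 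For \prefext{} I take an optimal \szwalk{}~$W$ through~$\vappdef$ and any $\critfn$-optimal \swalk{$s$}{$\vappdef$}~$W'$; by \prefopt{} the prefix $\pathpref{W}{\vappdef}$ is itself $\critfn$-optimal, so $W'$ and $\pathpref{W}{\vappdef}$ coincide in every quantity that $\critfn$ reads off (arrival time, length, start time, as appropriate), whence $\critfn(W'\pathconcat\pathsuff{W}{\vappdef}) = \critfn(W) = \critfnopt(z)$; it then remains to verify that $W'\pathconcat\pathsuff{W}{\vappdef}$ is of the right type (a path, a restless walk, $\dots$) and hence lies in~$\walkset^{(\critfn)}$. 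That last step for the three \emph{path} concepts---where a splice of two paths may fail to be a path, exactly as \cref{fig:prefix-illustration} warns---is the one genuinely non-routine part of the proof.

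The two walk-based concepts are painless. For \emph{foremost temporal walks} I take $\critfn(W)$ to be the time of the last transition of~$W$; then $\critfnopt(v, t) = t$ for every reachable appearance~$\vappdef$, so \prefopt{} is immediate, and since $\critfn$ only looks at the last transition and every walk is of the right type, $\critfn(W'\pathconcat\pathsuff{W}{\vappdef}) = \critfn(W)$ gives \prefext{}. For \emph{shortest restless temporal walks} with bound~$\Delta$ I take $\critfn(W) = \operatorname{length}(W)$ if $W$ is $\Delta$-restless and $\critfn(W) = \infty$ otherwise. Being $\Delta$-restless is a local condition, so prefixes and suffixes of restless walks are restless, and the junction in $W'\pathconcat\pathsuff{W}{\vappdef}$ is fine because, already inside the restless walk~$W$, the first transition of $\pathsuff{W}{\vappdef}$ has time at most~$t+\Delta$; since $\operatorname{length}$ is additive under concatenation, both properties follow from the templates.

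For \emph{shortest temporal paths} and \emph{shortest fastest temporal paths} I dodge the path-preservation issue by \emph{not} assigning~$\infty$ to non-paths: I set $\critfn(W) = \operatorname{length}(W)$ in the first case and $\critfn(W) = N\cdot\duration(W) + \operatorname{length}(W)$ with $N := 2n$ in the second, for \emph{all} temporal walks~$W$, where $\duration(W)$ is the difference between the last and the first transition time of~$W$. A loop-removal argument---delete the stretch of a walk between the first and the last occurrence of a repeated vertex, which never increases $\operatorname{length}$ or $\duration$---shows that any \szwalk{}~$W$ attaining $\critfnopt(z)$ is a path; moreover, since the length of a path (and of a concatenation of two paths) is below~$N$, such a $W$ must first minimize $\duration$ and then $\operatorname{length}$. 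Hence $\walkset^{(\critfn)}$ is precisely the set of shortest (fastest) $s$-$z$-paths, and now the cut-and-paste step for \prefopt{} may freely produce non-path intermediate walks, because $\critfnopt(z)$ is a minimum over all \szwalks{}, not only the path ones. For \prefext{}, the equality $\critfn(W') = \critfnopt(v, t) = \critfn(\pathpref{W}{\vappdef})$ forces $W'$ and $\pathpref{W}{\vappdef}$ to have equal length and, in the fastest case, equal duration---here is where the large coefficient~$N$ is used---hence equal start time; thus $W'\pathconcat\pathsuff{W}{\vappdef}$ has the same $\duration$ and $\operatorname{length}$ as~$W$, so $\critfn(W'\pathconcat\pathsuff{W}{\vappdef}) = \critfn(W) = \critfnopt(z)$, and by loop removal it lies in~$\walkset^{(\critfn)}$.

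Finally, for \emph{strict prefix-foremost temporal paths} I let $\critfn(W)$ be the time of the last transition of~$W$ whenever $W$ is a strict temporal path each of whose prefixes reaches its endpoint vertex at the earliest time attainable by any strict temporal walk from~$s$, and $\critfn(W) = \infty$ otherwise; then $\walkset^{(\critfn)}$ is the set of strict prefix-foremost $s$-$z$-paths, and \prefopt{} holds essentially by definition (a prefix of such a path again has the defining property, hence is $\critfn$-optimal to its endpoint appearance). For \prefext{} the key is a rigidity fact: on a strict prefix-foremost path every vertex is visited exactly at its \emph{strict foremost arrival time from~$s$}, a quantity that does not depend on the path. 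Consequently no vertex can occur both in the $\critfn$-optimal \swalk{$s$}{$\vappdef$}~$W'$ and in the interior or at the end of $\pathsuff{W}{\vappdef}$: in~$W'$ such a vertex would be reached at time~$\le t$, whereas in $\pathsuff{W}{\vappdef}$ (a contiguous piece of~$W$ lying after~$\vappdef$) it would be reached at time~$> t$, yet both times equal its strict foremost arrival time---impossible. So $W'\pathconcat\pathsuff{W}{\vappdef}$ is a path; it is strict because the transition times strictly increase across the junction, just as in~$W$; each of its prefixes is foremost (those contained in~$W'$ by the choice of~$W'$, those reaching into $\pathsuff{W}{\vappdef}$ because they reach each vertex at the time~$W$ does, namely its strict foremost arrival time); and its last transition time equals that of~$W$, so it lies in~$\walkset^{(\critfn)}$. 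This disposes of the last, and only truly delicate, instance of the path-preservation obstacle.
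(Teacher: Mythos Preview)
Your proof is correct and follows essentially the same approach as the paper: both go through the five concepts case by case, using the same \critfnstext{} (length for shortest, $N\cdot\duration+\text{length}$ for shortest fastest, etc.), the same cut-and-paste argument for \prefopt{}, and the same ``loop removal'' and ``foremost-arrival rigidity'' observations to handle the path-preservation obstacle in \prefext{}. Your write-up is somewhat more systematic---you isolate the two recurring templates and the legality of concatenation up front, and you are explicit about defining~$\critfn$ on \emph{all} walks for the shortest and shortest-fastest cases so that the cut-and-paste may pass through non-paths---but these are presentational refinements rather than a different route; the only numerical deviation is your choice $N=2n$ versus the paper's~$n$, which is immaterial.
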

%\appendixproof{prop:exps}
%{
\begin{proof}
\begin{description}
	\item{Foremost temporal walks:}
  
  In this case, the \critfntext{} $\critfn$ maps temporal walks to the time label
  of their last transition. It is clear that this \critfntext{} is \prefoptal{}, since
  every \swalk{$s$}{$\vappdef$}, that is, a temporal walk to a vertex
  \emph{appearance}, is trivially \copt{}. The \critfntext{} is also \prefextable{}, since every
  prefix of a foremost temporal walk to some vertex appearance $(v,t)$ can be
  substituted by a different temporal walk to that vertex appearance and the
  overall temporal walk will remain foremost since the arrival time to the last
  vertex is not changed.
  \item{Shortest temporal paths:}
  
  In this case, the \critfntext{} $\critfn$ maps temporal walks to the number of
  transitions in the temporal walk. Similarly to the static case, we can observe
  that every \copt{} temporal walk is in fact a temporal path. We have that
  $\critfn$ is \prefoptal{}, since similarly to the static case we have that
  every prefix of a shortest temporal path is also a shortest temporal path.
  The \critfntext{} $\critfn$ is also \prefextable{}, since substituting a
  prefix by another optimal, that is, equally long, prefix does not change the
  length of the temporal path. Furthermore, we still have a path since
  otherwise, if a vertex is visited multiple times, there is a shorter path that
  waits in the vertex that is visited multiple times, a contradiction to the
  assumption that the original path was shortest.
  \item{Shortest fastest temporal paths:}
  
  In this case, the \critfntext{} $\critfn$ maps temporal walks e.g.\ to the
  duration times $n$ plus the number of transitions in the temporal walk.
  Similarly to the case of shortest temporal paths, we can observe that every
  \copt{} temporal walk is in fact a temporal path. We have that $\critfn$ is
  \prefoptal{}, since a prefix not being \copt{} yields a contradiction to the
  assumption that the original temporal path was \copt{}: we replace the
  presumably non-optimal prefix by an optimal one. Since both prefixes end at
  the same vertex appearance, the optimal one is either faster or it has the
  same duration but is shorter. In both cases substituting the original prefix
  by the optimal one yields a ``better'' (according to~$\critfn$) temporal path
  (if it is not a path but a walk, we can wait at vertices that are visited
  multiple times).
  The \critfntext{} $\critfn$ is also
  \prefextable{}, since substituting a prefix by another optimal prefix does not
  change the duration or the length of the temporal path. Furthermore, we still
  have a path since otherwise, if a vertex is visited multiple times, there is a
  shorter path (with the same duration) that waits in the vertex that is visited
  multiple times, a contradiction to the assumption that the original path was
  shortest fastest.
  \item{Shortest restless temporal walks:}
  
  Let $\Delta$ be the upper bound on the difference of the time labels on two
  consecutive transitions. Similar to shortest temporal paths, the \critfntext{} $\critfn$
  maps temporal walks to the number of transitions in the temporal walk but under the condition that no two consecutive transitions have time labels that differ
  by more than $\Delta$. If the walk contains two consecutive transitions where
  the corresponding time labels differ by more than $\Delta$, then~$\critfn$ maps
  this walk to infinity. 
We have that $\critfn$ is
  \prefoptal{}, since a prefix not being \copt{} yields a contradiction to the
  assumption that the original temporal walk was \copt{}: we replace the
  presumably non-optimal prefix by an optimal one. Since both prefixes end at
  the same vertex appearance, the optimal one is shorter. Now substituting the
  original prefix by the optimal one yields a shorter restless temporal walk. 
  The \critfntext{} $\critfn$ is also \prefextable{}, since substituting a
  prefix by another optimal, that is, equally long prefix does not change the
  length of the temporal walk.
  \item{Strict prefix-foremost temporal paths:}
  
  In this case, we cannot give an explicit description of the \critfntext{}
  $\critfn$ that is independent of the given temporal graph. Nevertheless we
  have that by definition, every prefix of a strict prefix-foremost temporal path
  is strict and prefix-foremost, and hence the corresponding \critfntext{} is
  \prefoptal{}.
  Furthermore, we have that if we exchange a prefix of a strict prefix-foremost
  temporal path with another optimal prefix, then we again have the property that
  every prefix of the obtained path is foremost and we still have a path, since
  otherwise we visit a vertex multiple times at different time steps (since we
  are in the strict case), which is a contradiction to the original temporal
  path being prefix foremost. It follows that the corresponding \critfntext{} is
  also \prefextable{}.
\end{description}
\end{proof}%}
We remark that e.g.\ ``shortest fastest'' temporal path have to the best of our
knowledge not been considered yet for temporal betweenness computation while
being a very natural optimality criterion. Furthermore, many other natural
optimality concepts can be shown to be \prefcomple{} in a similar way to our
examples.

Since we aim for a very general framework, it is not surprising that our
running times for temporal betweenness computation can be
improved for specific optimality concepts by tailored algorithms.
As we will show in \cref{thm:fordbtw,thm:fordcounting}, we can count \copt{}
temporal walks and compute the temporal betweenness centrality with respect to
optimal walks for \prefcomple{} \critfnstext{}~$\critfn$ in~$O(n^2MT^2)$~time.
However, for example for strict prefix-foremost paths and shortest temporal
paths, 
% (with non-negative edge weights) 
the corresponding temporal betweenness computation can be done in
$O(nM\log M)$ and $O(n^3T^2)$ time, respectively~\cite{BMNR20}. Also the space
requirements can be improved for specific optimality concepts~\cite{BMNR20}. We
remark that the well-known techniques for static graphs~\cite{bellman1958routing,ford1956network,Bra01} can mostly be
transferred to the temporal setting in straightforward ways.

\subsection{Necessity of \prefopt{} and \prefext{}} \label{ssec:sharpPhard}
We now briefly motivate why we need a \critfntext{}~$\critfn$ to be both
\prefoptal{} and \prefextable{} in order to be able to count \copt{} temporal walks
in polynomial time. 
We do this by giving examples of \critfnstext{} which have
only one of the two properties and where the corresponding problem of counting
\copt{} temporal walks is~\SPC{}. Note that this implies that the corresponding
temporal betweenness computation problem is also~\SPC{}~\cite{BMNR20}. We
remark that this does not imply that \prefopt{} and \prefext{} cannot be replaced
by some weaker requirements.

\begin{proposition}%[\appref{thm:prefoptred}]
\label{thm:prefoptred}
	There exists a \prefoptal{} \critfntext{}~$\critfn$ for which counting the
	number of~\copt{} temporal walks is~\SPC{}.
\end{proposition}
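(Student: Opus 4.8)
The plan is to instantiate $\critfn$ with the \emph{foremost temporal path} criterion and to reduce from the classical $\SP$-complete problem of counting simple $s$-$z$-paths in a static directed graph. Concretely, define $\critfn(W)$ to be the time label of the last transition of $W$ if $W$ is a temporal path, and set $\critfn(W)\coloneqq\infty$ if $W$ visits some vertex more than once. This is a valid \critfntext{} (evaluating it just amounts to checking whether $W$ repeats a vertex, so it is polynomial-time computable and the hardness is not an artifact of an expensive cost function). With this choice, for all $s,z\in V$ the \copt{} \szwalks{} are precisely the foremost \szpaths{} of~$\TG$: $\critfn(W)<\infty$ forces $W$ to be a path, and then $\critfn(W)=\critfnopt(z)$ says that the arrival time of $W$ at~$z$ equals the earliest arrival time over all \szpaths{}. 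Hence it remains to (i)~show that $\critfn$ is \prefoptal{}, and (ii)~argue that counting foremost \szpaths{} is \SPC{}.

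For~(i), the key observation is that a temporal walk \emph{to a vertex appearance}~$\vappdef$ has no freedom in its $\critfn$-value: by definition its last transition is at time~$t$, so $\critfn(W')=t$ if $W'$ is a path and $\critfn(W')=\infty$ otherwise. Consequently $\critfnopt(v,t)=t$ as soon as a single \swalk{$s$}{$\vappdef$} that is a path exists, and $\critfnopt(v,t)=\infty$ otherwise (the source appearance~$\vappsource$ being covered by the usual convention). Now take $W\in\walkset^{(\critfn)}$, i.e., $W$ is a foremost \szpath{} for some $s,z$, and let $\vappdef\in W$. Since $W$ is a path, $\pathpref{W}{\vappdef}$ is well defined and is itself a path whose last transition is at time~$t$; therefore $\critfn(\pathpref{W}{\vappdef})=t=\critfnopt(v,t)$, which is exactly the condition in the definition of \prefopt{}. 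Observe that the same $\critfn$ is \emph{not} \prefextable{}: exchanging the prefix of a foremost path for a different path reaching the same appearance may force some vertex to be visited twice, so the concatenation leaves $\walkset^{(\critfn)}$ --- this is the situation of \cref{fig:prefix-illustration}, and it is precisely the ingredient that makes counting intractable even though \prefopt{} holds.

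For~(ii), counting foremost \szpaths{} is \SPC{}; this is already established in~\cite{BMNR20} (and it transfers to the associated temporal betweenness value). For self-containedness, here is a short direct reduction: given a static directed graph $G=(V_G,E_G)$ with $s\ne z$, for which counting simple $s$-$z$-paths is \SPC{} by a classical theorem of Valiant, build the temporal graph $\TG$ on vertex set~$V_G$ with $\timespan=1$ and time-arc set $\{(u,v,1)\mid (u,v)\in E_G\}$. In the non-strict model every temporal path of~$\TG$ uses only time step~$1$, so the \szpaths{} of~$\TG$ correspond bijectively to the simple $s$-$z$-paths of~$G$; moreover each of them arrives at time~$1$ and is therefore foremost. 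Hence the number of \copt{} \szwalks{} of~$\TG$ equals the number of simple $s$-$z$-paths of~$G$, a parsimonious reduction.

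The construction is essentially forced and the arguments are short; the only place that deserves care is the verification of~(i), where one has to notice that ``optimality of a prefix'' comes for free here simply because all temporal walks ending at a fixed appearance~$\vappdef$ share the single relevant $\critfn$-value. Thus the obstacle to polynomial-time counting lies entirely in the failure of \prefext{} --- consistently with the fact that foremost temporal \emph{walks}, for which this failure disappears, are \prefcomple{} by \cref{prop:exps}.
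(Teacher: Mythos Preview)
Your proof is correct and follows essentially the same approach as the paper: both use the foremost-temporal-path cost function, observe that \prefopt{} holds trivially because every walk reaching a fixed appearance~$\vappdef$ has the same finite $\critfn$-value, and invoke the \SPCness{} of counting foremost temporal paths. Your version is more self-contained (you spell out the Valiant-style reduction via $T=1$ rather than only citing~\cite{afrasiabi17,BMNR20}); one small aside---your pointer to \cref{fig:prefix-illustration} is slightly off, since that figure illustrates \emph{fastest} rather than foremost paths---but this does not affect the argument.
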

%\appendixproof{thm:prefoptred}
%{
\begin{proof}
	Consider the set of foremost temporal paths (strict or non-strict). It is easy
	to see that this set is induced by the following \critfntext{}~$\critfn$:
\begin{align*}
	\critfn(W) \coloneqq \begin{cases}
		t, & \text{$W$ is a (strict) temporal path with arrival time~$t$,} \\
		\infty, & \text{otherwise.}
	\end{cases}
\end{align*}
	Moreover,~$\critfn$ is trivially \prefoptal{} since all temporal paths arriving
	at the same vertex appearance have the same value of the criterion function.
	However, it is known that counting foremost temporal paths is
	\SPC{}~\cite{afrasiabi17,BMNR20}.
\end{proof}%}

As demonstrated, now we have that if we leave out \prefext{}, then we obtain hardness. Next, we show that if we leave out \prefopt{}, then we also obtain hardness.
\begin{proposition}%[\appref{thm:prefextred}]
\label{thm:prefextred}
	There exists a \prefextable{} \critfntext{}~$\critfn$ for which counting the
	number of~\copt{} temporal walks is~\SPC{}.
\end{proposition}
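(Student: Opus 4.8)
The plan is to exhibit one \critfntext{}~$\critfn$ that is \prefextable{} in \emph{every} temporal graph but for which counting \copt{} \szwalks{} is \SPC{}. By the polynomial-time counting result for \prefcomple{} \critfnstext{}, such a $\critfn$ cannot be \prefoptal{} (otherwise it would be \prefcomple{}, and counting would be polynomial-time solvable), so establishing the hardness together with \prefext{} already forces the missing property; if desired one can also pin this down by an explicit one-graph example in which an optimal walk has a prefix that is not \copt{}. For the hardness I would reduce from counting foremost temporal paths, which is \SPC{} by \cref{thm:prefoptred}: given a temporal graph~$\TG$ and vertices~$s,z$, build a temporal graph~$\TG'$ and choose~$\critfn$ so that the \copt{} \szwalks{} of~$\TG'$ are in count-preserving bijection with the foremost $s$-$z$-paths of~$\TG$ --- for instance by spreading every time arc of~$\TG$ over all time steps, so that each temporal path of~$\TG$ reappears as a strictly time-increasing walk in~$\TG'$, and tailoring~$\critfn$ to single these out.

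The difficulty is that the two requirements pull in opposite directions. On one hand, \prefextable{} \critfnstext{} tend to be ``tail-determined'': if $\critfn(W)$ depends only on, say, the arrival time of~$W$ together with its last transition, then replacing a prefix of~$W$ never changes $\critfn(W)$, so \prefext{} is immediate --- but then the \copt{} walks are exactly those reaching~$z$ via a cheapest final transition, and these can be counted in polynomial time by a straightforward dynamic program over the predecessor graph. On the other hand, the genuinely hard walk and path concepts --- temporal paths, fastest temporal paths, walks required to span the vertex set, and the like --- impose a \emph{global} constraint, and for such concepts \prefext{} typically fails: one can substitute a prefix that ``clashes'' with the retained suffix (shares a vertex, or destroys the global property), exactly as \cref{fig:prefix-illustration} illustrates for fastest paths. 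So the heart of the proof is to engineer a~$\critfn$ that threads this needle --- its induced set of optimal walks must be rich enough that counting it is \SPC{}, yet closed under precisely the prefix-substitutions demanded by \prefext{}.

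I expect the verification of \prefext{} to be the main obstacle, and considerably more subtle than the hardness reduction. In particular it must be checked for the induced optimal set in \emph{all} temporal graphs, including the corner case in which a vertex appearance~$(v,t)$ is reached by an optimal walk but admits no \copt{} prefix, so that $\critfnopt(v,t)=\infty$ and every $s$-$(v,t)$-walk vacuously attains the ``optimum'' --- here one must ensure that appending the fixed suffix still lands in the induced optimal set. Checking that the reduction is parsimonious, and (if one wants it explicitly) producing the separate temporal graph on which \prefopt{} fails, should by contrast be routine.
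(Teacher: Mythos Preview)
Your proposal correctly diagnoses the central tension but never resolves it: you do not actually write down~$\critfn$, and the suggested route---inflate~$\TG$ so that foremost paths become specially-shaped walks, then ``tailor~$\critfn$ to single these out''---gives no mechanism by which that tailoring would preserve \prefext{}. As it stands this is a discussion of the difficulty, not a construction.

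The paper's argument is much more direct and sidesteps the uniformity worry entirely. It reduces from counting simple $s$-$z$-paths in a \emph{static} directed graph~$G$ (the classic \SPC{} \textsc{Paths} problem of Valiant), not from foremost temporal paths. Attach a pendant vertex~$z^\star$ adjacent only to~$z$, regard the result as a (trivially) temporal graph, and set
\[
\critfn(W)=\begin{cases}\operatorname{length}(W),&W\text{ ends at some }v\neq z^\star,\\ 1,&W\text{ is a simple }s\text{-}z^\star\text{-path},\\ \infty,&\text{otherwise.}\end{cases}
\]
The idea you are missing is to let~$\critfn$ behave \emph{differently according to the endpoint of the walk}. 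For every target other than~$z^\star$ the \copt{} walks are shortest walks---already \prefcomple{}, so nothing to check---while at the single special target~$z^\star$ every simple path is declared optimal, and counting those is exactly counting $s$-$z$-paths in~$G$. \Prefopt{} visibly fails (a non-shortest $s$-$z$-path can sit as a prefix of an optimal $s$-$z^\star$-path); \prefext{} is argued to survive because any shortest prefix can be spliced in and the concatenation still reaches~$z^\star$. The paper's own verification of \prefext{} is brief and only treats the intermediate vertex~$z$ explicitly, so if you adopt this construction you should still carry out the check at every intermediate appearance---precisely the subtlety you anticipated in your third paragraph.
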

%\appendixproof{thm:prefextred}
%{
\begin{proof}
	We reduce~\problemn{Paths} to the problem of counting~\copt{} temporal walks
	for a \prefextable{} \critfntext{}~$\critfn$.
	In~\problemn{Paths} we are given a static graph $G=(V,E)$ and two vertices~$s,z\in V$, and are asked to count the number of different paths from $s$ to
	$z$ in $G$. It is well-known that~\problemn{Paths}
	is~\SPC~\cite{valiant_complexity_1979}.
	
	Consider graph~$G$ and the two vertices~$s, z\in V$. We add a special degree-one
	terminal vertex~$z^{\star}$ to~$V$ and connect it to~$z$. 

	Then,
	\begin{align*}
		\critfn(W) \coloneqq \begin{cases}
			\operatorname{length}(W), & \text{$W$ is a~\swalk{$s$}{$v$} for a~$v\in
			V\setminus\{z^{\star}\}$,} \\
			1, & \text{$W$ is a~\spath{$s$}{$z^{\star}$},} \\
			\infty, & \text{otherwise.}
		\end{cases}
	\end{align*}

	Intuitively, for all~$v\in V\setminus\{z^{\star}\}$ we consider only the
	shortest~\swalks{$s$}{$v$} as optimal (and obviously such shortest walks are
	also paths). This set of shortest walks is also clearly \prefextable{} (and
	\prefoptal{}).
	
	For the special vertex we proceed differently, however. We define
	every~\spath{$s$}{$z^{\star}$} as optimal, whether it is shortest or not. Note
	that clearly all paths to~$z^{\star}$ need to go through~$z$ and, conversely,
	any~\szpath{} can be extended to a~\spath{$s$}{$z^{\star}$}. Hence, the
	number of~\copt{}~\spaths{$s$}{$z^{\star}$} is precisely the number
	of~\szpaths{}. Also note that, in particular, any shortest~\spath{$s$}{$z$} can
	be extended to a~\spath{$s$}{$z^{\star}$}, so our~\critfntext{} will
	remain~\prefextable{}. However, \prefopt{} is now violated since we may
	have a non-shortest~\szpath{} as a subpath of
	a~\copt{}~\spath{$s$}{$z^{\star}$}.
\end{proof}%}

\section{Counting Walks}\label{sec:walkcounting}
%\appendixsection{sec:walkcounting} 
In this section, complementing the hardness shown in \cref{ssec:sharpPhard}, we extend classic algorithms for path and walk counting to our setting.
This will provide a polynomial-time algorithm for counting optimal walks with respect to a \prefcomple{} \critfntext{}.

The general idea is roughly as follows:
first, compute the static predecessor graph~$\predgraphdef(\critfn)$ with respect to~$\critfn$ (see \cref{def:pred-graph}) using a slightly modified version of the classic Bellman-Ford algorithm~\cite{bellman1958routing,ford1956network}.
Second, count the walks in this static graph~$\predgraphdef(\critfn)$ with known approaches; the results correspond to the number of \copt{}~walks in the temporal input graph.

We start with statements explaining the connection between~$\predgraphdef(\critfn)$ and the number of walks in the temporal input graph.
Here, an important corner case is that there might be infinitely many \copt{}~walks.
\begin{definition}[Finiteness]\label{def:finite}
	Let~$\critfn$ be a \critfntext{} for a temporal graph~$\TG$.
	Then, $\critfn$ is \emph{finite} on~$\TG$ if the induced set~$\walkset^{(\critfn)}$ of optimal temporal walks of~$\critfn$ has finite cardinality.
\end{definition}
As stated next, finiteness of the \critfntext~$\critfn$ coincides with the predecessor graph~$\predgraphdef(\critfn)$ containing directed cycles and is, thus, easy to detect.

\begin{lemma}%[\appref{thm:finiteiffacyclic}]
\label{thm:finiteiffacyclic}
	Let~$\critfn$ be a \prefcomple{} \critfntext{}.
	Then~$\critfn$ is finite if and only if the predecessor graph~$\predgraphdef(\critfn)$ is acyclic. 
\end{lemma}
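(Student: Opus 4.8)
The plan is to prove the two implications separately, noting up front that the easy direction --- acyclicity of $\predgraphdef(\critfn)$ implies finiteness --- does not use \prefcompat{} at all, whereas the converse is exactly where both properties are needed.

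\emph{Acyclic $\Rightarrow$ finite.} Fix the source $s$ of $\predgraphdef(\critfn)$ and note this graph is finite (at most $nT+1$ vertices) regardless of $\critfn$. I would show that reading off the sequence of vertex appearances $\vappsource,(w_1,t_1),(w_2,t_2),\dots$ of a walk $W\in\walksetrestdef$ (that is, of a walk in $\walkset^{(\critfn)}$ starting in $s$) yields a directed walk in $\predgraphdef(\critfn)$: each consecutive pair of appearances is an arc because $W$ itself witnesses the corresponding membership in $\dirpredsetdef(\cdot,\cdot)$ (the first pair via the special $\vappsource$-case). This assignment is injective, since a temporal walk starting in $s$ is uniquely reconstructed from its appearance sequence (a time arc is determined by its two endpoints and its label). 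If $\predgraphdef(\critfn)$ is acyclic, a directed walk in it cannot repeat a vertex and is therefore a directed path, and a finite graph has only finitely many directed paths; hence $\walksetrestdef$ is finite, and taking the disjoint union over the $n$ possible sources gives $\setcard{\walkset^{(\critfn)}}<\infty$.

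\emph{Cycle $\Rightarrow$ infinite.} Suppose $\predgraphdef(\critfn)$ has a directed cycle $(v_0,t_0)\to\dots\to(v_m,t_m)=(v_0,t_0)$ with $m\ge 1$. First I would record two structural facts. Every arc $((v,t),(w,t'))$ satisfies $t\le t'$ (transitions along a temporal walk have non-decreasing labels, and the special arcs leave $\vappsource$ with $t=0$), so going once around the cycle forces $t_0=\dots=t_m=:t^{*}$; moreover arc heads carry genuine time labels, so $t^{*}\ge 1$ and in particular $\vappsource$ lies on no cycle. And since $\TE$ has no loops, consecutive cycle vertices satisfy $v_i\ne v_{i+1}$. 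Hence the cycle realizes a closed temporal walk $Z:=v_0\trans{t^{*}}v_1\trans{t^{*}}\dots\trans{t^{*}}v_0$ of length $m\ge 1$, and each arc $(v_i,t^{*})\to(v_{i+1},t^{*})$ is witnessed by some $W_i\in\walkset^{(\critfn)}$ starting in $s$ that contains consecutive transitions $u_i\trans{t^{*}}v_i\trans{t^{*}}v_{i+1}$ (the special case is excluded since $(v_i,t^{*})\ne\vappsource$); in particular $W_i$ goes through both $(v_i,t^{*})$ and $(v_{i+1},t^{*})$.

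The core is a ``one step around the cycle'' operation. Given a temporal $s$-$(v_i,t^{*})$-walk $R$ with $\critfn(R)=\critfnopt(v_i,t^{*})$, \prefext{} applied with witness $W_i$ at $(v_i,t^{*})$ yields $R\pathconcat\pathsuff{W_i}{(v_i,t^{*})}\in\walkset^{(\critfn)}$, a walk that goes through $(v_{i+1},t^{*})$ because the appended suffix contains the transition $v_i\trans{t^{*}}v_{i+1}$, and not at its very start since $v_i\ne v_{i+1}$. Then \prefopt{} applied to this walk shows that its prefix up to $(v_{i+1},t^{*})$ is an $s$-$(v_{i+1},t^{*})$-walk of $\critfn$-value $\critfnopt(v_{i+1},t^{*})$ and of length at least $\operatorname{length}(R)+1$. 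Chaining this operation once around the whole cycle turns an optimal $s$-$(v_0,t^{*})$-walk into one of strictly larger length. Starting from $\pathpref{W_0}{(v_0,t^{*})}$ (optimal to $(v_0,t^{*})$ by \prefopt{} applied to $W_0$) and iterating, I obtain temporal $s$-$(v_0,t^{*})$-walks of unbounded length, each of $\critfn$-value $\critfnopt(v_0,t^{*})$; finally \prefext{} with witness $W_0$ splices $\pathsuff{W_0}{(v_0,t^{*})}$ onto each of them, producing infinitely many pairwise distinct members of $\walkset^{(\critfn)}$. This is the contrapositive of ``finite $\Rightarrow$ acyclic''.

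I expect the main obstacle to be twofold. Conceptually, one must notice that a cycle in $\predgraphdef(\critfn)$ is forced to live at a single time step, which is what makes ``pumping'' it type-correct at all. Technically, because a non-strict walk can visit a vertex appearance several times, one has to track carefully which occurrence the $\pathpref{\cdot}{\cdot}$ and $\pathsuff{\cdot}{\cdot}$ operators refer to, and in particular verify that each lap around the cycle strictly increases the length (so the constructed walks really are distinct). This is exactly where \prefopt{} (re-extracting an optimal prefix after the detour) and \prefext{} (inserting the detour without leaving $\walkset^{(\critfn)}$) must cooperate, which also explains why neither property alone can give the equivalence (cf.\ \cref{thm:prefoptred,thm:prefextred}).
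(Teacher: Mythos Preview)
Your proof is correct and follows the same strategy as the paper: the easy direction by finiteness of paths in a finite DAG, the other by a pumping argument that alternates \prefopt{} and \prefext{} to wind around the cycle. The only organizational difference is that the paper first isolates as a separate lemma (\cref{thm:novisitedtwice}) the statement that, under finiteness, no \copt{} walk revisits a vertex appearance---proved by exactly your pumping idea---and then derives acyclicity from that; your explicit observation that every cycle in $\predgraphdef(\critfn)$ is forced to sit at a single time step~$t^{*}$ is a point the paper's proof leaves implicit.
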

%\appendixproof{thm:finiteiffacyclic}
%{
Before proving \cref{thm:finiteiffacyclic}, we show the following lemma.

\begin{lemma}\label{thm:novisitedtwice}
	Let~$\critfn$ be a \prefcomple{} and finite \critfntext{}.
	Then no \copt{} walk visits the same vertex appearance~$\vappdef$ twice.
\end{lemma}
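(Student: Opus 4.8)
The plan is to argue by contradiction and to ``pump'' a loop. Suppose some walk~$W\in\walkset^{(\critfn)}$, say a~\szwalk{}, visits a vertex appearance~$\vappdef$ twice. Then~$W$ contains two transitions~$e_i = u_1\trans{t}v$ and~$e_j = u_2\trans{t}v$ at positions~$i<j$; write~$W = P_1\pathconcat C\pathconcat S$, where~$P_1$ is the prefix of~$W$ up to and including~$e_i$, where~$C = (e_{i+1},\dots,e_j)$ is the closed subwalk from~$v$ back to~$v$, and where~$S$ is the remaining suffix. Note that~$C$ is nonempty and, since time labels are non-decreasing and~$e_j$ has label~$t$, every transition of~$C$ has label~$t$; hence~$P_1\pathconcat C^k\pathconcat S$ is a well-defined temporal walk for every~$k\ge 1$, where~$C^k$ denotes the concatenation of~$k$ copies of~$C$. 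Using the non-strict-walk conventions from \cref{sec:prelims}, $e_i$ is the first and~$e_j$ the last occurrence of a transition of the form~$u\trans{t}v$ in~$W$, so~$\pathpref{W}{\vappdef} = P_1\pathconcat C$ and~$\pathsuff{W}{\vappdef} = C\pathconcat S$; in particular~$S$ contains no transition~$u\trans{t}v$.

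Next I would show by induction on~$k\ge 1$ that~$W_k := P_1\pathconcat C^k\pathconcat S$ lies in~$\walkset^{(\critfn)}$. The base case~$k=1$ is just~$W_1 = W$. For the inductive step, assume~$W_k\in\walkset^{(\critfn)}$. It is an~\szwalk{} that goes through~$\vappdef$, and because~$S$ never revisits~$\vappdef$, the last occurrence of a transition~$u\trans{t}v$ in~$W_k$ is the copy of~$e_j$ at the end of the last~$C$-block, so~$\pathpref{W_k}{\vappdef} = P_1\pathconcat C^k$. By \prefopt{}, $\critfn(P_1\pathconcat C^k) = \critfnopt(v,t)$, so~$W' := P_1\pathconcat C^k$ is a~\swalk{$s$}{$\vappdef$} attaining the optimal $\critfn$-value of~$\vappdef$. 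Since the original walk~$W\in\walkset^{(\critfn)}$ is an~\szwalk{} going through~$\vappdef$, \prefext{} applies and gives~$W'\pathconcat\pathsuff{W}{\vappdef}\in\walkset^{(\critfn)}$; but~$W'\pathconcat\pathsuff{W}{\vappdef} = P_1\pathconcat C^k\pathconcat C\pathconcat S = W_{k+1}$, which completes the induction.

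Finally, since~$C$ has length at least~$1$, the walks~$(W_k)_{k\ge 1}$ have strictly increasing length and are therefore pairwise distinct. Hence~$\walkset^{(\critfn)}$ is infinite, contradicting the finiteness of~$\critfn$. I expect the only delicate point to be the bookkeeping for non-strict walks: verifying that the decomposition~$W = P_1\pathconcat C\pathconcat S$ is legitimate, that~$S$ (and hence each freshly inserted copy of~$C$) does not change which occurrence of~$\vappdef$ counts as ``last'', and that all concatenations are well-defined in the sense of \cref{sec:prelims}. Once those are pinned down, the two halves of \prefcompat{} plug in directly and the length argument closes the proof.
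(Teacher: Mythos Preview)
Your proposal is correct and follows essentially the same loop-pumping argument as the paper: both take a \copt{} walk visiting~$\vappdef$ twice, use \prefopt{} to certify that the longer prefix ending at~$\vappdef$ is optimal, and then use \prefext{} to splice it back in front of the suffix starting at the earlier visit, producing an infinite family of distinct \copt{} walks. The only cosmetic difference is that you always feed the enlarged prefix back into the \emph{original} walk~$W$ (gaining exactly one copy of~$C$ per step), whereas the paper re-applies the full procedure to the newly obtained walk; your version makes the bookkeeping about ``last occurrence of~$\vappdef$'' slightly cleaner, but the underlying idea is identical.
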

\begin{proof}
	Assume for the purpose of contradiction that there exists a pair of vertices~$s, z\in V$ such that there exists a~\copt{} walk~$W$ that visits~$\vappdef$ (at least) twice. 
	
	Let now~$W'$ be the prefix of~$W$ that ends at the second time~$\vappdef$ appears in~$W$ and let~$W''$ be the suffix of~$W$ that starts at the first appearance of~$\vappdef$ in~$W$ (see \cref{fig:walkacyclicproof}).
	\tikzset{every loop/.style={min distance=20mm,in=135,out=45,looseness=20}}
	\begin{figure}[t]
		\centering
		\begin{tikzpicture}[scale=2]
			\foreach[count=\i] \x / \y / \opt in {0/0/{}, 2/0/{label=below:{$(v,t)$}}, 4/0/{}}
			{
				\node (v\i) at (\x,\y) [draw=black,circle,thick,inner sep=2pt,fill=white,apply style/.expand once=\opt] {};
			}
			\foreach \i / \j in {1/2, 2/3}
			{
				\draw [->,>=stealth,draw=black] (v\i) -- (v\j);
			}
			\path [->,>=stealth,draw=black] (v2) edge [loop] ();
			\node () at (3,0.75) [text=blue] {$W''$};
			\node () at (1,0.75) [text=red] {$W'$};
			\node () at (2,0.75) {$W$};
			\begin{pgfonlayer}{background}
				\begin{scope}[opacity=.15, transparency group]
					\draw [line join=round, draw=blue,line width=6pt] (v2) edge [loop] () (v2.center) -- (v3.center) ;  
				\end{scope}
				\begin{scope}[opacity=.15, transparency group]
					\draw [line join=round, draw=red,line width=4pt] (v1.center) -- (v2.center) (v2) edge [loop] () ; 
				\end{scope}
			\end{pgfonlayer}
			
		\end{tikzpicture}

		\caption{Illustration of the connection of the walks $W$, $W'$, and~$W''$ in the proof of \cref{thm:novisitedtwice}.}
		\label{fig:walkacyclicproof}
	\end{figure}
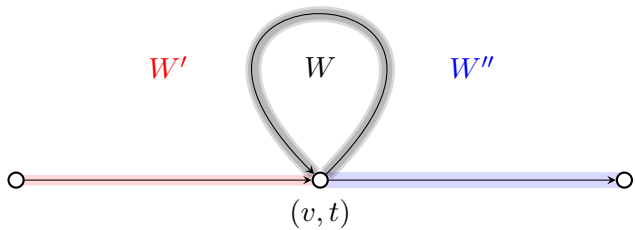
	
	By \prefopt{}, we have~$\critfn(W') = \critfnopt(v, t)$. Hence, by \prefext{}, the walk~$W''' = W' \pathconcat W''$ is also a~\copt{}~\szwalk{}. However, it visits~$\vappdef$ one more time than~$W$ does.
	
	We can now repeat the same procedure with~$W'''$ to get a walk that visits~$\vappdef$ one more time than~$W'''$ does. By carrying the process on inductively, each time adding one more ``loop'' to our walk, we can see that we can visit~$\vappdef$ arbitrarily many times on a~\copt{}~\szwalk{}. This contradicts the finiteness of~$\critfn$.
\end{proof}

Using \Cref{thm:novisitedtwice}, we can prove the aforementioned equivalence:

\begin{proof}[Proof of \cref{thm:finiteiffacyclic}.]
	(``$\Rightarrow$'') Assume for the purpose of contradiction that the predecessor relation is cyclic. Then there exist two vertex appearances~$\vappdef$ and~$\vapp{w}{t}$ such that there exists a walk~$W$ that first visits~$\vappdef$ and then~$\vapp{w}{t}$ and a walk~$U$ that does the opposite. 
	
	Let now~$W'$ be the prefix of~$W$ that ends at the first appearance of~$\vapp{w}{t}$ and~$U'$ the suffix of~$U$ that starts at the first appearance of~$\vapp{w}{t}$. Then, by \prefopt{}, we have~$\critfn(W') = \critfnopt(w, t)$. Hence, by \prefext{}, we have that~$X = W'\pathconcat U'$ is~\copt{}. However, $X$~visits the same vertex appearance~$\vappdef$ twice, contradicting \cref{thm:novisitedtwice}.
	
	(``$\Leftarrow$'') This direction is easy to see: clearly, there are only finitely many walks that do not visit the same vertex appearance more than once. Hence, if the set of~\copt{} walks is infinite, there has to exist a~\copt{} walk that visits some vertex appearance~$\vappdef$ twice. Therefore~$\vappdef$ is its own predecessor (transitively), creating a cycle.
\end{proof}

Assuming that we can efficiently compute~$\predgraphdef(\critfn)$, \cref{thm:finiteiffacyclic} allows us to detect and deal with the cases of infinitely many \copt{} walks.
Moreover, if $\critfn$ is finite, then~$\predgraphdef(\critfn)$ is a DAG and, hence, counting walks is easy.
Hence, we arrive at the following statement.

\begin{lemma}%[\appref{thm:genwalkcounting}]
	\label{thm:genwalkcounting}
	Let~$\critfn$ be a \prefcomple{} \critfntext{} and~$\predgraphdef(\critfn)$ a predecessor graph for a temporal graph~$\TG$ and a vertex~$s$ in~$\TG$.
	Given~$\predgraphdef(\critfn)$, the number of \copt{} temporal walks between from~$s$ to any vertex~$v$ and any vertex appearance~$\vappdef$ in~$\TG$ can be computed in~$O(|\predgraphdef(\critfn)|)$ time.
\end{lemma}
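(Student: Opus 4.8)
The plan is to reduce the count of \copt{} temporal walks to a count of directed paths in the predecessor graph, which for a \prefcomple{} \critfntext{} is a DAG in the relevant case. Throughout I take $\predgraphdef(\critfn)$ to be the predecessor graph of $s$ with respect to the induced optimal walks $\walkset^{(\critfn)}$. First I would invoke \cref{thm:finiteiffacyclic}: if $\predgraphdef(\critfn)$ contains a directed cycle, then $\critfn$ is not finite, so there are infinitely many \copt{} temporal walks and we report $\infty$; this check costs $O(|\predgraphdef(\critfn)|)$ time. Otherwise $\predgraphdef(\critfn)$ is acyclic; compute a topological order of its vertex appearances in linear time and run a single sweep of the dynamic program
\[
  A(\vappsource) \coloneqq 1, \qquad A(\vapp{w}{t'}) \coloneqq \sum_{\vappdef \in \dirpredsetdef(w,t')} A(\vappdef) \text{ for } \vapp{w}{t'} \neq \vappsource ,
\]
taking the nodes in topological order (the in-neighbours of $\vapp{w}{t'}$ in $\predgraphdef(\critfn)$ are exactly $\dirpredsetdef(w,t')$). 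Since $\vappsource$ has no incoming arc---every transition carries a time label at least $1$---this is well defined on the DAG, and the sweep touches each node and each arc once, so it runs in $O(|\predgraphdef(\critfn)|)$ time (treating arithmetic on the counts as unit cost, as in the classical Brandes-style algorithms). By induction along the order, $A(\vapp{w}{t'})$ equals the number of directed $\vappsource$-$\vapp{w}{t'}$-paths in $\predgraphdef(\critfn)$.

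The heart of the matter is to show that this number coincides with the number of \copt{} temporal walks from $s$ to $\vapp{w}{t'}$, for every node $\vapp{w}{t'}$ of $\predgraphdef(\critfn)$. I would prove the recurrence already at the level of \copt{} walks, via the correspondence that sends a \copt{} temporal walk $W'$ from $s$ to $\vapp{w}{t'}$ (with $\vapp{w}{t'} \neq \vappsource$) to the pair formed by its last transition $v\trans{t'}w$ and the prefix $W'' \coloneqq \pathpref{W'}{\vappdef}$, where $\vappdef$ is the penultimate vertex appearance of $W'$. Two things need checking: (i)~$W''$ is itself a \copt{} temporal walk from $s$ to $\vappdef$ and $\vappdef \in \dirpredsetdef(w,t')$; and (ii)~conversely, for every $\vappdef \in \dirpredsetdef(w,t')$ and every \copt{} temporal walk $W''$ from $s$ to $\vappdef$, the concatenation $W'' \pathconcat (v\trans{t'}w)$ is a \copt{} temporal walk from $s$ to $\vapp{w}{t'}$. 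Both directions rely on both ingredients of \prefcompat: since $\vapp{w}{t'}$ (respectively $\vappdef$) is a node of $\predgraphdef(\critfn)$, some $W \in \walkset^{(\critfn)}$ starting in $s$ passes through it, so by \prefext{} we may splice $W'$ (respectively $W''$) in place of the matching prefix of $W$ and stay inside $\walkset^{(\critfn)}$---which also exhibits the transitions witnessing membership in the relevant direct predecessor set---and then \prefopt{} applied to the spliced walk at $\vappdef$ (respectively at $\vapp{w}{t'}$) certifies $\critfn$-optimality of the shortened (respectively extended) prefix. \Cref{thm:novisitedtwice} guarantees that none of these walks revisits a vertex appearance, so all prefixes and suffixes used are unambiguous---also for non-strict walks---and the correspondence is a bijection; together with $A(\vappsource)=1$ this yields the per-appearance counts.

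For the per-vertex counts, recall that a \copt{} temporal walk from $s$ to a vertex $v$ is, by definition of $\walkset^{(\critfn)}$, a walk $W$ with $\critfn(W) = \critfnopt(v) < \infty$; such a $W$ ends at some appearance $\vappdef$ with $\critfnopt(v,t) = \critfnopt(v)$ and is then $\critfn$-optimal for $\vappdef$, and conversely every \copt{} temporal walk from $s$ to such a $\vappdef$ is a \copt{} temporal walk from $s$ to $v$. Hence the number of \copt{} temporal walks from $s$ to $v$ equals $\sum_{t\,:\,\critfnopt(v,t)=\critfnopt(v)} A(\vappdef)$ (and is $0$ when $\critfnopt(v)=\infty$), which one extra linear scan over the appearances grouped by vertex computes, using the optimal values $\critfnopt(v,t)$ produced alongside $\predgraphdef(\critfn)$. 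Summing up, everything stays within $O(|\predgraphdef(\critfn)|)$ time.

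The routine parts are the acyclic dynamic program and the bookkeeping; the delicate part is (i)--(ii), i.e.\ chaining \prefopt{} and \prefext{} in the right order. Both are genuinely needed here: \prefopt{} alone would let one peel off a prefix but not reattach an arbitrary equally good one, while \prefext{} alone would not certify that the peeled prefix is $\critfn$-optimal. The only other easy-to-overlook point is the non-finite case and agreeing that ``the number'' is then reported as infinite, which \cref{thm:finiteiffacyclic} lets us detect cheaply.
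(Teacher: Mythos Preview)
Your argument for the acyclic case is correct and matches the paper's approach; your treatment of the bijection in points (i)--(ii) is in fact more carefully spelled out than the paper's sketch, which just invokes \prefopt{} and \prefext{} in one sentence each.

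There is, however, a genuine gap in your handling of the non-finite case. You propose that if $\predgraphdef(\critfn)$ contains any directed cycle you report $\infty$ and stop. But the lemma asks for the count \emph{for each} vertex appearance, and even when the predecessor graph is not acyclic some appearances can still have only finitely many \copt{} walks reaching them---precisely those not reachable from any cycle. (Think of $\vappsource$ having one arc into a cycle and a separate arc to an isolated appearance $(u,t)$: the latter has exactly one \copt{} walk.) Reporting $\infty$ for such appearances is simply wrong. The paper deals with this by first computing the strongly connected components of $\predgraphdef(\critfn)$, marking with $\infty$ every appearance that lies in or is reachable from a nontrivial SCC, deleting those, and only then running the DAG dynamic program on the remainder; all of this is still linear in $|\predgraphdef(\critfn)|$. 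For the downstream use in \cref{sec:tempbet} your simplification would be harmless, since there $\critfn$ is assumed finite, but for the lemma as stated the cyclic case needs the finer SCC-based treatment.
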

% 
% \appendixproof{thm:genwalkcounting}
{
\begin{proof}[Proof sketch]
	First, run Kosaraju's algorithm \cite{AHU83} on the graph to compute in linear time the strongly connected components (SCCs) in~$\predgraphdef(c)$ while also keeping track of their size. 
	Mark every SCC of size~$>1$ with~$\infty$. 
	Then, run a BFS over the SCCs starting in all nodes marked~$\infty$ and label all the nodes reached during the BFS with~$\infty$.
	Now, for every vertex appearance~$\vappdef$ belonging to an SCC marked with~$\infty$, we set the number of~\swalks{$s$}{$\vappdef$} as~$\infty$ and then remove it from~$\predgraphdef(c)$. 
% 	This step is correct by \cref{thm:infinitewalksiffcycle}. 
	The correctness of this step follows from the proof of \cref{thm:finiteiffacyclic}. 
	
	Let~$G'$ be the remaining graph.
	Clearly,~$G'$ is a DAG. 
	We next show that counting paths to a vertex in~$G'$ will exactly correspond to counting \copt{} temporal walks to a vertex appearance corresponding to that vertex:
	
	We shall prove the statement above by induction on the vertices of~$G'$, taken in the topological ordering. 
	First,~$\vappsource$ must clearly be the first vertex in that ordering. Obviously, there is only one~\copt{} path to~$\vappsource$, so the computed value will be correct here.
	
	Now, consider a vertex~$v_i$ corresponding to some appearance~$\vappdef$. 
	By definition, all its direct predecessors~$v_j$ come before~$v_i$ in the topological ordering in the graph. 
	Since~$v_j\in\dirpredsetdef(v_i)$, by \prefext{}, every~\copt{} walk to~$v_j$ can be extended to a~\copt{} walk to~$v_i$. 
	Conversely, by \prefopt{}, we are also not missing any~\copt{} walks to~$\vappdef$. 
	Hence, the computed number of paths to~$v_i$ will also be correct.

	To compute the number of \copt{} temporal walks \emph{to a vertex}~$v\in V$ we can first find~$\critfnopt(v) = \min_{t\in\natinterval{\timespan}}\critfnopt(v, t)$, and then compute~$\sigmaresdef_{sv} = \sum_{t\mid \critfnopt(v, t) = \critfnopt(v)}\sigmaresdef_{s(v, t)}$.
	This path counting in a DAG is clearly doable in time linear in the size of~$\predgraphdef(c)$.
\end{proof}
}
 
To employ \cref{thm:genwalkcounting}, we need to compute~$\predgraphdef(\critfn)$.
To this end, we run a slight variation of the classical Bellman-Ford algorithm~\cite{bellman1958routing,ford1956network}.
This leads to the following lemma.
Recall that we assume here that~$\critfn$ can be evaluated in~$O(1)$ time.

\begin{lemma}%[\appref{thm:construct-pred-graph}]
\label{thm:construct-pred-graph}
	Let~$\critfn$ be a \prefcomple{} \critfntext{} for a temporal graph~$\TG$. 
	Let~$s$ be a vertex in~$\TG$.
	Then the predecessor graph~$\predgraphdef(\critfn)$ can be computed in~$\bigO(nMT^2)$ time.
\end{lemma}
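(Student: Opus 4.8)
The plan is to reduce the construction to a single-source relaxation on a static expansion of~$\TG$, in the spirit of the Bellman--Ford algorithm, followed by a pruning pass. First I would build a static digraph~$H$ (a source-centric static expansion) with vertex set~$\{\vappsource\}\cup\appset$ and an arc from~$\vappdef$ to~$\vapp{w}{t'}$ whenever~$(v,w,t')\in\TE$ and~$0\le t\le t'$ (allowing~$t=0$ only if~$v=s$). This~$H$ is a candidate superset of~$\predgraphdef(\critfn)$; it has~$\bigO(nT)$ vertices and, since each of the~$M$ time arcs contributes at most~$T+1$ arcs, $\bigO(MT)$ arcs.

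On~$H$ I would run the following relaxation. For every appearance keep a tentative optimal value~$d(\cdot)$ and a representative witnessing walk, stored as a predecessor-pointer chain back to~$\vappsource$ together with whatever~$\bigO(1)$ auxiliary state is needed to evaluate~$\critfn$ incrementally; initialize~$d(\vappsource)$ to~$\critfn$ of the empty walk and all other values to~$\infty$. Then iterate~$\bigO(nT)$ rounds: in each round, for every arc~$(\vappdef,\vapp{w}{t'})$ of~$H$ with~$d(v,t)<\infty$, append the time arc~$v\trans{t'}w$ to the representative of~$\vappdef$ (an~$\bigO(1)$ operation), evaluate~$\critfn$ on the resulting walk in~$\bigO(1)$ time, and if it beats~$d(w,t')$ update~$d(w,t')$ and its representative. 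Once the values have stabilized, do one more pass over~$H$, recording for each~$\vapp{w}{t'}$ every incoming arc~$(\vappdef,\vapp{w}{t'})$ for which the one-arc extension of the final representative of~$\vappdef$ has~$\critfn$-value equal to~$d(w,t')$. Since~$d(\cdot)$ is always~$\critfn$ of an actual walk ending at that appearance, these values are throughout upper bounds on~$\critfnopt$.

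The correctness claim I would prove is: after the rounds, for every appearance~$\vappdef$ lying on some~$W\in\walkset^{(\critfn)}$ that starts at~$s$ (equivalently, every vertex of~$\predgraphdef(\critfn)$ other than~$\vappsource$, which is correct by initialization), we have~$d(\vappdef)=\critfnopt(v,t)$, the stored representative is a~\copt{}~\swalk{$s$}{$\vappdef$}, and the arcs recorded into~$\vappdef$ are exactly~$\dirpredsetdef(v,t)$. I would argue this along such a walk~$W$: by \prefopt{}, every prefix~$\pathpref{W}{\vappdef}$ is already~\copt{}, so the optimal values along~$W$ obey a one-arc recurrence; and by \prefext{}, once the relaxation has found \emph{any}~\copt{}~\swalk{$s$}{$\vappdef$} as the representative of~$\vappdef$, appending the next time arc of~$W$ again yields a~\copt{} walk to the next appearance---replace the prefix of~$W$ by the representative using \prefext{}, then apply \prefopt{} to the resulting optimal-to-vertex walk---so the representative extends with the right cost and the corresponding arc is recorded in the final pass. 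Inductively, each such value and arc is found within a number of rounds equal to the position of the appearance along~$W$; and because the~$d$-values only over-estimate, wrong values on irrelevant appearances cannot corrupt the relevant ones. That~$\bigO(nT)$ rounds suffice follows because in the finite case \cref{thm:novisitedtwice} forces~\copt{} walks to be loopless, hence of length~$\bigO(nT)$, while the non-finite case is precisely the situation in which~$\predgraphdef(\critfn)$ has a directed cycle (\cref{thm:finiteiffacyclic}) and does not change this bound.

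Finally I would prune~$H$ down to~$\predgraphdef(\critfn)$: put~$\critfnopt(v)\coloneqq\min_t d(v,t)$, mark~$\vapp{z}{t}$ a \emph{target appearance} when~$d(z,t)=\critfnopt(z)<\infty$ (any optimal~\swalk{$s$}{$\vapp{z}{t}$} is then also optimal to the vertex~$z$, hence lies in~$\walkset^{(\critfn)}$), and by a backward search in~$H$ compute the set~$U$ of appearances that can reach a target appearance. A downward induction along such a reaching path---using \prefext{} at each step, since the next appearance on the path already lies on an optimal-to-vertex walk---shows that~$U$ is exactly the set of appearances occurring on walks of~$\walkset^{(\critfn)}$ that start at~$s$ (together with~$\vappsource$), i.e.\ the vertex set of~$\predgraphdef(\critfn)$, and that the arcs of~$H$ within~$U$ are exactly the arcs of~$\predgraphdef(\critfn)$; outputting the induced subgraph~$H[U]$ then finishes the construction. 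Building~$H$ and the pruning cost~$\bigO(|H|)=\bigO(MT)$, while the relaxation costs~$\bigO(nT)$ rounds times~$\bigO(MT)$ arc-scans with~$\bigO(1)$ work each, so the total is~$\bigO(nMT^2)$. I expect the main obstacle to be exactly the correctness argument sketched above---that a Bellman--Ford relaxation carrying a single representative per appearance, and re-evaluating the \emph{arbitrary} (prefix-compatible) function~$\critfn$ on each one-arc extension, nevertheless converges to the true optima and records precisely the direct-predecessor relation; this is where both halves of \prefcompat{} are indispensable, and it also subsumes the somewhat delicate round count and the handling of the non-finite case.
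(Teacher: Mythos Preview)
Your approach is the paper's: \cref{alg:bellmanford} is exactly a Bellman--Ford relaxation on the static expansion~$H$ you describe (the nested loops over~$(v,w,t')\in\TE$ and~$t\le t'$ enumerate the arcs of~$H$), with~$nT$ rounds, one representative walk per appearance, and the predecessor sets maintained during relaxation rather than in a separate pass. Your correctness sketch---use \prefopt{} to know that prefixes of a walk in~$\walkset^{(\critfn)}$ are~\copt{} to their appearances, then \prefext{} to swap in whatever representative the relaxation currently holds and conclude that the one-arc extension is again~\copt{}---is precisely what the paper's proof gestures at.

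There is one genuine slip in your pruning pass: both the backward search and the final output must use the graph of \emph{recorded} arcs, not all of~$H$. Consider~$\critfn=\operatorname{length}$ (shortest) with time arcs $s\to a$ at~$1$, $s\to b$ at~$1$, $b\to a$ at~$2$, $a\to z$ at~$3$, $s\to z$ at~$3$. The targets are~$(a,1),(b,1),(z,3)$, and the predecessor graph has exactly those appearances together with~$(s,0)$ and the three arcs out of~$(s,0)$. But~$H$ contains the arc~$(a,2)\to(z,3)$, so a backward search in~$H$ drags~$(a,2)$ into~$U$ even though no optimal-to-vertex walk visits it; and~$H$ also contains~$(a,1)\to(z,3)$, so~$H[U]$ has that arc even though~$\dirpredsetdef(z,3)=\{(s,0)\}$. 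Replacing~$H$ by the recorded-arc subgraph in both places fixes this without affecting the running time. The paper sidesteps the whole pruning question by simply returning the~$\dirpredsetdef$ sets and the values~$\critfnopt(v,t)$, leaving the downstream walk-counting step (\cref{thm:genwalkcounting}) to filter by~$\critfnopt(v)=\min_t\critfnopt(v,t)$.
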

% 
%\appendixproof{thm:construct-pred-graph}
%{
\begin{proof}
	The algorithm is a straightforward generalization of the Bellman-Ford algorithm~\cite{bellman1958routing,ford1956network}, adapted to our use case, see \cref{alg:bellmanford}. 
	
	The correctness can be easily proven by induction on the (maximal) length of an optimal walk (similar as for the classical Bellman-Ford algorithm).
	Again, as in the proof of \cref{thm:genwalkcounting}, the key observation is that \prefcompat{} ensures that the walks can be extended step by step:
	By \prefext{}, every~\copt{} walk to a vertex appearance~$\vappdef$ can be extended to a~\copt{} walk to a successor, and thus will be found during looping over all arcs (see \cref{algBF:edge-loop,algBF:time-loop} in \cref{alg:bellmanford}). 
	Conversely, by \prefopt{}, we are also not missing any~\copt{} walks to~$\vappdef$. 
	
	As for the running time, it is easy to see that~\cref{alg:bellmanford} runs in~$\bigO(nMT^2)$ time. 
\end{proof}
\begin{algorithm}[t]
	\caption{A generalized Bellman-Ford algorithm for finding the predecessor sets~$\dirpredsetdef$ for a given source~$s\in V$.}
	\label{alg:bellmanford}
	\begin{algorithmic}[1]
		\Input{A temporal graph $\TG = (V,\TE,T)$, a source vertex~$s\in V$ and a \critfntext~$\critfn$.}
		\Output{Predecessor graph~$\predgraphdef(\critfn)$ and the optimal values~$\critfnopt(v, t)$ for all appearances.}
		\State $\dirpredsetdef(v, t) \gets \emptyset; \dist[v, t] \gets \infty$ for all~$v \in V, t \in [T]$ \Comment{Initialization}
		\For{$i = 1$ to $nT$} 
			\For{$(v\trans{t'}w)\in\TE$} \label{algBF:edge-loop}
				\For{$1 \leq t \leq t'$} \label{algBF:time-loop}
					\State $\Call{Relax}{\vappdef, \vapp{w}{t'}}$
				\EndFor
			\EndFor
		\EndFor
		\State \Return $\dirpredsetdef$, $\dist$
		\end{algorithmic}
\end{algorithm}
\begin{algorithm}[t]
	\caption{Function relaxing the transition from~$\vapp{v}{t}$ to~$\vapp{w}{t'}$. Note that by~$\pathenc{v, t}$ we mean an arbitrary~\swalk{$s$}{$\vappdef$} that is implicitly represented by~$\dirpredsetdef(v, t)$. (By \prefcompat{} all such walks are effectively interchangeable.)}
	\begin{algorithmic}[1]
		\Function{Relax}{$\vapp{v}{t}$, $\vapp{w}{t'}$}
		\If {$\dist[v, t] = \infty$}
		\State \Return
		\EndIf
		\If {$\critfn(\pathenc{v, t}\pathconcat(v\trans{t'}w)) < \dist[w, t']$}
		\State $\dirpredsetdef(w, t') \gets \emptyset$
		\State $\dist[w, t'] \gets \critfn(\pathenc{v, t}\pathconcat(v\trans{t'}w))$
		\EndIf
		\If {$\critfn(\pathenc{v, t}\pathconcat(v\trans{t'}w)) = \dist[w, t']$}
		\State $\dirpredsetdef(w, t') \gets \dirpredsetdef(w, t') \cup \{\vappdef\}$
		\EndIf
		\EndFunction
	\end{algorithmic}
\end{algorithm}
%}
% 
Applying \cref{thm:genwalkcounting,thm:construct-pred-graph} starting from each vertex yields the following.

\begin{theorem}[Walk counting]\label{thm:fordcounting}
	Let~$\critfn$ be a \prefcomple{} \critfntext{} for a temporal graph~$\TG = \temptuple$. 
% 	Let~$\critfn$ be a \prefcomple{} \critfntext{} and~$s$. 
	Then the number of~\copt{} temporal walks from each vertex~$s \in V$ to any vertex appearance~$\vappdef$ can be computed in~$\bigO(n^2MT^2)$ time.
\end{theorem}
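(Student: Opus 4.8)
The plan is to simply iterate the two preceding lemmas over all possible source vertices. Concretely, for each $s \in V$ I would first invoke \cref{thm:construct-pred-graph} to build the predecessor graph~$\predgraphdef(\critfn)$ in~$\bigO(nMT^2)$ time, and then invoke \cref{thm:genwalkcounting} to read off from~$\predgraphdef(\critfn)$ the number of \copt{} temporal walks from~$s$ to every vertex appearance~$\vappdef$ (and, if one also wants them, to every vertex~$v\in V$) in~$\bigO(|\predgraphdef(\critfn)|)$ time. Since \cref{thm:finiteiffacyclic} guarantees that non-finiteness of~$\critfn$ manifests exactly as a directed cycle in~$\predgraphdef(\critfn)$, the counting procedure underlying \cref{thm:genwalkcounting} (which marks the strongly connected components of size~$>1$ and propagates~$\infty$ along the DAG of SCCs) correctly returns~$\infty$ for precisely those appearances reachable through such a cycle, so the claimed output is well-defined even when there are infinitely many \copt{} walks.

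The only quantity left to pin down is~$|\predgraphdef(\critfn)|$, which must not dominate the~$\bigO(nMT^2)$ cost of building it. The vertex set of~$\predgraphdef(\critfn)$ is contained in~$\{\vappsource\}\cup(V\times[\timespan])$, hence has size~$\bigO(nT)$. For the arc set, observe that every arc~$(\vappdef,\vapp{w}{t'})$ of~$\predgraphdef(\critfn)$ witnesses a transition~$v\trans{t'}w\in\TE$ (also in the corner case~$\vappdef=\vappsource$) together with a choice~$t\le t'$; there are~$M$ transitions and at most~$T+1$ admissible values of~$t$ per transition, so the arc set has size~$\bigO(MT)$. Thus~$|\predgraphdef(\critfn)| = \bigO(nT + MT)$, which is subsumed by~$\bigO(nMT^2)$, and a single choice of~$s$ therefore costs~$\bigO(nMT^2)$ overall.

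Summing over the~$n$ choices of~$s$ yields a total running time of~$\bigO(n\cdot nMT^2) = \bigO(n^2MT^2)$, as claimed.

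I do not expect a genuine obstacle here: both ingredients are already established (the Bellman--Ford-style construction of~$\predgraphdef(\critfn)$ and path counting in a DAG with~$\infty$-propagation for the cyclic part), and the argument reduces to running the two subroutines once per source and bounding~$|\predgraphdef(\critfn)|$. The only mild care needed is to verify the~$\bigO(MT)$ bound on the arc set and to confirm that the counting step is genuinely dominated by the construction step rather than the other way around.
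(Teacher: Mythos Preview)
Your proposal is correct and matches the paper's approach exactly: the paper simply states that applying \cref{thm:genwalkcounting,thm:construct-pred-graph} from each source vertex yields the theorem, without further elaboration. Your additional bound~$|\predgraphdef(\critfn)| = \bigO(nT + MT)$ is a useful detail the paper omits, but the underlying argument is identical.
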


\section{Computing Temporal Betweenness}\label{sec:tempbet}
%\appendixsection{sec:tempbet}
In this section, we discuss how to compute temporal betweenness centrality
efficiently for optimal temporal walks defined by a \prefcomple{} \critfntext{}.
In order to do this, we adapt the machinery of showing a recursive relation of
the temporal dependencies~\cite{BMNR20} to our generalized setting. 
%Due to space constraints,
%we refer to \cref{app:dependency} for details. 
Together with the fact that we
can compute the walk-counts in polynomial time (\cref{thm:fordcounting}), we can
use a Brandes-like~\cite{Bra01} approach to compute the temporal betweenness
values in polynomial time.

We prove a general dependency accumulation formula, alongside some of its implications. 
Note that we essentially adapt the results of \citet{BMNR20} to our generalized setting, the proofs are quite similar. 
In the following, for convenience we use~$\walkset$ (instead of~$\walkset^{(\critfn)}$) to denote a set of~\copt{} walks for some arbitrary \critfntext{}~$\critfn$.
As betweenness centrality is not well-defined for infinite \critfntext{}s, we assume here that~$\critfn$ is finite (in the sense of \cref{def:finite}). 
Note that we can detect whether~$\critfn$ is finite in polynomial time (see \cref{sec:walkcounting}).
First, we define ``\edgedeptext{}'' which we need for the general dependency
accumulation lemma and we show how to compute it.
\begin{definition}[\Edgedeptext]\label{def:edgedep}
	$\pairdepdef(v, t, (v, w, t'))$ denotes the fraction of~\szwalks{} in~$\walkset$ that go through the appearance~$\vapp{v}{t}$ and use the temporal arc~$(v, w, t')$.
\end{definition}
\begin{lemma}%[\appref{thm:dependencyAccumulationFractionLemma}]
\label{thm:dependencyAccumulationFractionLemma}
Let~$\critfn$ be a finite \prefcomple{} \critfntext{} and let $\TG =
	 \temptuple$ be a temporal graph.
Let~$\walkset$ be the set of optimal temporal walks in
 $\TG$ induced by~$\critfn$. Fix a source~$s\in V$. If $\pairdepdef(v, t, (v, w, t'))$ is positive, then 
	\begin{align*}
		\pairdepdef(v, t, (v, w, t')) = \frac{\sigmaresdef_{s\vappdef}}{\sigmaresdef_{s\vapp{w}{t'}}}\cdot\frac{\sigmaresdef_{sz}(w, t')}{\sigmaresdef_{sz}}.
	\end{align*}
\end{lemma}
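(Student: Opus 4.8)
The plan is to prove the underlying \emph{count} identity
$\sigmaresdef_{sz}(v,t,(v,w,t')) = \sigmaresdef_{s\vappdef}\cdot\sigmaresdef_{sz}(w,t')/\sigmaresdef_{s\vapp{w}{t'}}$,
where $\sigmaresdef_{sz}(v,t,(v,w,t'))$ denotes the number of \copt{} $s$-$z$-walks in $\walkset$ that go through the appearance $\vappdef$ and directly afterwards traverse the arc $(v,w,t')$ (so that $\pairdepdef(v,t,(v,w,t')) = \sigmaresdef_{sz}(v,t,(v,w,t'))/\sigmaresdef_{sz}$), and where, as usual, $\sigmaresdef_{s\vappdef}$ is the number of \copt{} $s$-$\vappdef$-walks; dividing the count identity by $\sigmaresdef_{sz}$ then gives the statement. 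The hypothesis that $\pairdepdef(v,t,(v,w,t'))$ is positive means precisely that $\sigmaresdef_{sz} > 0$ and that the set of walks just described is nonempty, so all quantities below are well defined and positive. Fix the source $s$. The argument is the \prefcompat-based analogue of the classical Brandes decomposition~\cite{Bra01} and of its temporal version in~\citet{BMNR20}; its three ingredients are \prefopt{}, \prefext{}, and the fact (\cref{thm:novisitedtwice}) that no \copt{} walk revisits a vertex appearance.

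I would start with a decomposition and one key claim. Since no \copt{} walk revisits an appearance, every walk $W$ counted by $\sigmaresdef_{sz}(v,t,(v,w,t'))$ satisfies $t\le t'$, contains $(w,t')$ exactly once, and decomposes uniquely as $W = \pathpref{W}{\vappdef}\pathconcat(v\trans{t'}w)\pathconcat\pathsuff{W}{\vapp{w}{t'}}$; by \prefopt{}, $\pathpref{W}{\vappdef}$ is a \copt{} $s$-$\vappdef$-walk. The key claim is the converse-flavoured statement that \emph{for every} \copt{} $s$-$\vappdef$-walk $P$, the prolongation $P\pathconcat(v\trans{t'}w)$ is a \copt{} $s$-$\vapp{w}{t'}$-walk. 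To prove it, fix a witness $W_0$ of the counted type (which exists by the nonemptiness above, and for which $t\le t'$). As $W_0\in\walkset$ goes through $\vappdef$, \prefext{} at $\vappdef$ gives $P\pathconcat\pathsuff{W_0}{\vappdef} = P\pathconcat(v\trans{t'}w)\pathconcat\pathsuff{W_0}{\vapp{w}{t'}}\in\walkset$; since this walk lies in $\walkset$, it does not revisit $(w,t')$, so $(w,t')\notin P$ and its prefix up to $(w,t')$ is exactly $P\pathconcat(v\trans{t'}w)$ --- whence \prefopt{} forces $\critfn(P\pathconcat(v\trans{t'}w)) = \critfnopt(w,t')$.

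The count identity then follows from two bijections onto a common set: let $S$ be the set of all suffixes $\pathsuff{R}{\vapp{w}{t'}}$ as $R$ ranges over the $s$-$z$-walks in $\walkset$ that go through $(w,t')$. For the first bijection, $W\mapsto(\pathpref{W}{\vappdef},\pathsuff{W}{\vapp{w}{t'}})$ sends the walks counted by $\sigmaresdef_{sz}(v,t,(v,w,t'))$ to pairs consisting of a \copt{} $s$-$\vappdef$-walk and an element of $S$: it is injective by the decomposition, and surjective because, given $(P,\sigma)$ with $\sigma=\pathsuff{R}{\vapp{w}{t'}}$, the key claim makes $P\pathconcat(v\trans{t'}w)$ a \copt{} $s$-$\vapp{w}{t'}$-walk, so \prefext{} at $(w,t')$ --- exchanging $\pathpref{R}{\vapp{w}{t'}}$ for it --- produces $P\pathconcat(v\trans{t'}w)\pathconcat\sigma\in\walkset$, a walk of the counted type with prefix $P$ and suffix $\sigma$; hence $\sigmaresdef_{sz}(v,t,(v,w,t')) = \sigmaresdef_{s\vappdef}\cdot|S|$. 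For the second, the standard decomposition $R\mapsto(\pathpref{R}{\vapp{w}{t'}},\pathsuff{R}{\vapp{w}{t'}})$, justified in the same way via \prefopt{} and \prefext{} at $(w,t')$, is a bijection from the $s$-$z$-walks through $(w,t')$ onto pairs consisting of a \copt{} $s$-$\vapp{w}{t'}$-walk and an element of $S$, so $\sigmaresdef_{sz}(w,t') = \sigmaresdef_{s\vapp{w}{t'}}\cdot|S|$. Eliminating $|S|$ and dividing by $\sigmaresdef_{sz}$ finishes the proof. The only delicate step --- on which both bijections rest --- is the key claim, i.e.\ chaining \prefext{} and \prefopt{} to see that a \copt{} walk to $\vappdef$ extended by $(v,w,t')$ is \copt{} to $(w,t')$; everything else is bookkeeping about appearances that cannot coincide, handled uniformly by \cref{thm:novisitedtwice}.
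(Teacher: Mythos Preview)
Your proof is correct and follows essentially the same approach as the paper: both use \prefopt{} and \prefext{} to show that any \copt{} prefix to~$\vappdef$ can be combined with any suffix from~$\vapp{w}{t'}$, then count the resulting products. Your version is more careful with the bookkeeping---you make the bijections explicit, isolate the ``key claim'' that $P\pathconcat(v\trans{t'}w)$ is \copt{} to $(w,t')$, and invoke \cref{thm:novisitedtwice} to justify uniqueness of the decompositions---whereas the paper states the same counting argument more informally (``of the $\sigmaresdef_{s\vapp{w}{t'}}$ many walk prefixes which end in~$\vapp{w}{t'}$, exactly~$\sigmaresdef_{s\vappdef}$ go through~$\vappdef$ and use the transition'').
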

% \appendixproof{thm:dependencyAccumulationFractionLemma}
% {
\begin{proof}
	Let~$W$ be an arbitrary~\copt{}~\szwalk{} that goes through~$\vappdef$ and makes the direct transition~$v\trans{t'}w$. By~\prefopt{}, we have that~$\critfn(\pathpref{W}{\vappdef}) = \critfnopt(v, t)$ and~$\critfn(\pathpref{W}{\vapp{w}{t'}}) = \critfnopt(w, t')$. Now, by \prefext{}, we can substitute an arbitrary~\copt{}~\swalk{$s$}{$\vappdef$} for $\pathpref{W}{\vappdef}$ and still get a valid~\copt{}~\swalk{$s$}{$\vapp{w}{t'}$}. Similarly, we can then substitute an arbitrary~\copt{}~\swalk{$s$}{$\vapp{w}{t'}$} for $\pathpref{W}{\vapp{w}{t'}}$ and get a valid~\copt{}~\szwalk{}. 
	
	\begin{figure}[t]
	\centering
	
	\begin{tikzpicture}[scale=0.85,yscale=1,xscale = 1,
		vertex/.style={circle, draw, minimum size=4em}, transform shape]
		\path
		% Top layer
		node[vertex](s) {$s$}
		node[right = of s](sd) {} %{$\cdots$}
		node[vertex, right = of sd](v) {$\vapp{v}{t}$}
		node[vertex, right = of v](w) {$\vapp{w}{t'}$}
		node[right = of w](wd) {} %{$\cdots$}
		node[vertex, right = of wd](z) {$z$};
		
		\draw [->, decorate, decoration={snake}, thick] (s)--(v) node [midway, above] {\Large$\sigmaresdef_{s\vappdef}$};
		\draw (s) edge [->, decorate, decoration={snake, post length=0.5mm}, thick, out=+60,in=+120] node [midway, above=0.25em] {\Large$\sigmaresdef_{s\vapp{w}{t'}}$} (w);
		\draw (v) edge [thick] (w);
		\draw [->, decorate, decoration={snake}, thick] (w)--(z) node [midway, above] {\Large$\frac{\sigmares{\pathset}_{sz}(w, t')}{\sigmaresdef_{s\vapp{w}{t'}}}$};
	\end{tikzpicture}
	\caption{\cref{thm:dependencyAccumulationFractionLemma}: combining arbitrary prefix to~$\vapp{v}{t}$ with an arbitrary suffix from~$\vapp{w}{t'}$.}
	\label{fig:edgedep}
\end{figure}
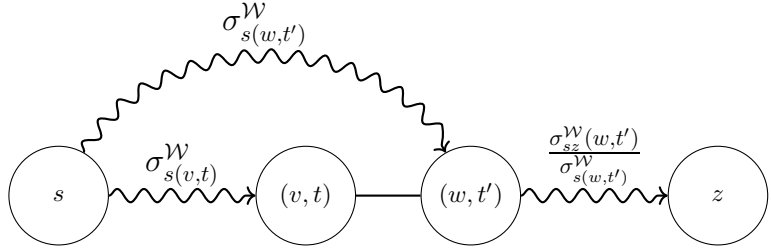
	
	Hence, we can combine an arbitrary prefix which ends in~$\vapp{v}{t}$ with an arbitrary suffix which starts at~$\vapp{w}{t'}$ to get a valid~\szpath{} in~$\pathset$ (see \cref{fig:edgedep}). With all of that in mind, we can now make the following argument:
	
	Of the~$\sigmaresdef_{s\vapp{w}{t'}}$ many walk prefixes which end in~$\vapp{w}{t'}$, exactly~$\sigmaresdef_{s\vappdef}$ go through~$\vapp{v}{t}$ and use the transition~$v\trans{t'}w$. Now, there are also~$\frac{\sigmaresdef_{sz}(w, t')}{\sigmaresdef_{s\vapp{w}{t'}}}$ many unique walk suffixes starting from the appearance~$\vapp{w}{t'}$ and going to the vertex~$z$. Thus, there are~$\sigmaresdef_{s\vappdef}\cdot\frac{\sigmaresdef_{sz}(w, t')}{\sigmaresdef_{s\vapp{w}{t'}}}$ many~\szwalks{} that go through~$\vapp{v}{t}$ and use the transition~$v\trans{t'}w$. Finally, we divide by the total number of~\szwalks{} to get the expression in the statement of the lemma.
\end{proof}%}
The lemma allows to prove the following general dependency accumulation formula.

\begin{lemma}%[\appref{thm:gendepacc}) (
[General dependency accumulation]\label{thm:gendepacc} 
Let~$\critfn$ be a finite \prefcomple{} \critfntext{} and let $\TG =
	 \temptuple$ be a temporal graph.
Let~$\walkset$ be the set of optimal temporal walks in
 $\TG$ induced by~$\critfn$. Fix a source~$s\in V$. Then, %the following temporal dependency accumulation formula holds.
	\begin{align*}
		\deltapointdef(v, t) = \appdepdef + \sum_{\vapp{w}{t'}\in \dirsuccsetdef(v, t)}\frac{\sigmaresdef_{s\vappdef}}{\sigmaresdef_{s\vapp{w}{t'}}}\cdot\deltapointdef(w, t')
	\end{align*}
\end{lemma}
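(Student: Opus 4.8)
The plan is to start from the definition $\deltapointdef(v,t) = \sum_{z\in V}\pairdepdef(v,t)$ and split each term $\pairdepdef(v,t) = \sigmaresdef_{sz}(v,t)/\sigmaresdef_{sz}$ according to how an \szwalk{} passing through $\vappdef$ behaves \emph{immediately after} visiting $\vappdef$. There are exactly two mutually exclusive possibilities for such a walk: either it ends at $\vappdef$ (that is, $z = v$ and the walk's last transition arrives at $\vappdef$), or it continues via some transition $v\trans{t'}w$, which by \cref{thm:novisitedtwice} (applicable since $\critfn$ is finite and \prefcomple{}) corresponds to a well-defined successor $\vapp{w}{t'}\in\dirsuccsetdef(v,t)$. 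The contribution of the first case, summed over all $z$, is precisely the \appdeptext{} $\appdepdef = \pairdep{\pathset}{\sourcevertex}{\midvertex}(\midvertex, t)$, since the only $z$ contributing is $z = v$ and for that $z$ the fraction of walks through $\vappdef$ that stay at $\vappdef$ is the appearance dependency.

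For the second case, I would observe that the set of \szwalks{} through $\vappdef$ that continue is partitioned by the choice of first outgoing transition, so $\sigmaresdef_{sz}(v,t) = \sigmaresdef_{sz}(v,t,(v,w_1,t'_1))\cdot\sigmaresdef_{sz} + \dots$ — more cleanly, $\pairdepdef(v,t) = \appdepdef + \sum_{\vapp{w}{t'}\in\dirsuccsetdef(v,t)}\pairdepdef(v,t,(v,w,t'))$, where $\pairdepdef(v,t,(v,w,t'))$ is the \edgedeptext{} from \cref{def:edgedep}. Summing this identity over all $z\in V$ and interchanging the order of summation gives
\begin{align*}
	\deltapointdef(v,t) = \appdepdef + \sum_{\vapp{w}{t'}\in\dirsuccsetdef(v,t)}\sum_{z\in V}\pairdepdef(v,t,(v,w,t')).
\end{align*}
Now I plug in \cref{thm:dependencyAccumulationFractionLemma}: whenever the edge dependency is positive it equals $\frac{\sigmaresdef_{s\vappdef}}{\sigmaresdef_{s\vapp{w}{t'}}}\cdot\frac{\sigmaresdef_{sz}(w,t')}{\sigmaresdef_{sz}}$, and when it is zero both sides of that expression are zero (one checks the degenerate cases $\sigmaresdef_{sz}=0$ or $\sigmaresdef_{s\vapp{w}{t'}}=0$ cause no issue, as then no relevant walk exists), so the substitution is valid term by term. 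The inner sum over $z$ then factors: $\sum_{z\in V}\frac{\sigmaresdef_{s\vappdef}}{\sigmaresdef_{s\vapp{w}{t'}}}\cdot\frac{\sigmaresdef_{sz}(w,t')}{\sigmaresdef_{sz}} = \frac{\sigmaresdef_{s\vappdef}}{\sigmaresdef_{s\vapp{w}{t'}}}\sum_{z\in V}\pairdep{\pathset}{\sourcevertex}{z}(w,t') = \frac{\sigmaresdef_{s\vappdef}}{\sigmaresdef_{s\vapp{w}{t'}}}\cdot\deltapointdef(w,t')$, which is exactly the claimed right-hand side.

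The main obstacle I anticipate is not the algebra but making the combinatorial partition airtight: I need to argue that every \szwalk{} through $\vappdef$ visits $\vappdef$ exactly once (this is where \cref{thm:novisitedtwice} and hence finiteness and \prefcompat{} are essential — otherwise "the transition after $\vappdef$" is ambiguous and walks would be multiply counted), so that decomposing by the unique first outgoing transition genuinely partitions the walk set. I would also be slightly careful about the edge case $z = v$ appearing in \emph{both} the appearance term and a successor term: a walk through $\vappdef$ with $z=v$ that then departs $\vappdef$ and never returns contributes to a successor term, while one that stops at $\vappdef$ contributes to the appearance term, and these are disjoint — so there is no double counting. Once that bookkeeping is pinned down, the rest is the routine summation interchange and the factoring above.
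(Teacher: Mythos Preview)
Your proposal is correct and follows essentially the same route as the paper: partition the walks through $(v,t)$ by their next transition (using \cref{thm:novisitedtwice} to make this well-defined), apply \cref{thm:dependencyAccumulationFractionLemma} to each edge-dependency term, swap the order of summation, and factor to recognise $\deltapointdef(w,t')$. The only slip is the displayed per-$z$ identity, where the first summand should be a $z$-dependent term that vanishes for $z\neq v$ rather than $\appdepdef$ itself---the paper sidesteps this by writing the decomposition directly at the level of the sum over~$z$ (its equation~\eqref{eq:gendepexpansion}), which is exactly what your surrounding prose describes anyway.
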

% \appendixproof{thm:gendepacc}
% {
\begin{proof}
	We start by expanding the sum in the formula defining~$\deltapointdef(v, t)$, %(see \cref{def:tempdep}),  
	considering its summands a bit more precisely.
	\begin{align}
		\deltapointdef(v, t) & = \sum_{z\in V}\pairdepdef(v, t) \notag \\ 
		& = \appdepdef + \sum_{z\in V}\sum_{\vapp{w}{t'}\in \dirsuccsetdef(v, t)} \pairdepdef(v, t, (v, w, t')) \label{eq:gendepexpansion} 
	\end{align}
	Where~$\pairdepdef(v, t, (v, w, t'))$ is as defined in \cref{def:edgedep}. Since~$\vapp{v}{t}$ is not a direct predecessor on any~\swalk{$\sourcevertex$}{$\vappdef$}, we need to pull it out of the sum. Conversely, for any~$z\in V$, the vertex appearance~$\vappdef$ may be contained at most once in any \szwalk{} (by \cref{thm:novisitedtwice}). Hence the inner sum of \cref{eq:gendepexpansion} precisely captures $\pairdepdef(v, t)$.
	
	We can now use \cref{thm:dependencyAccumulationFractionLemma} to simplify the summation formula:
% 	\begin{IEEEeqnarray*}{CCl}
% 		&\sum_{z\in V} & \sum_{\vapp{w}{t'}\in \dirsuccsetdef(v, t)} \pairdepdef(v, t, (v, w, t)) \\
% 		\overset{\cref{thm:dependencyAccumulationFractionLemma}}{=} & \sum_{z\in V} & \sum_{\vapp{w}{t'}\in \dirsuccsetdef(v, t)}\frac{\sigmaresdef_{s\vappdef}}{\sigmaresdef_{s\vapp{w}{t'}}}\cdot\frac{\sigmaresdef_{sz}(w, t')}{\sigmares{\walkset}_{sz}} \\
% 		= & \sum_{\vapp{w}{t'}\in \dirsuccsetdef(v, t)} & \sum_{z\in V}\frac{\sigmaresdef_{s\vappdef}}{\sigmaresdef_{s\vapp{w}{t'}}}\cdot\frac{\sigmaresdef_{sz}(w, t')}{\sigmares{\walkset}_{sz}} \\
% 		= & \sum_{\vapp{w}{t'}\in \dirsuccsetdef(v, t)} & \frac{\sigmaresdef_{s\vappdef}}{\sigmaresdef_{s\vapp{w}{t'}}} \sum_{z\in V}\frac{\sigmaresdef_{sz}(w, t')}{\sigmares{\walkset}_{sz}} \\
% 		\overset{\cref{def:tempdep}}{=} & \sum_{\vapp{w}{t'}\in \dirsuccsetdef(v, t)} & \frac{\sigmaresdef_{s\vappdef}}{\sigmaresdef_{s\vapp{w}{t'}}}\cdot\deltapointdef(w, t'),
% 	\end{IEEEeqnarray*}
		\begin{align*}
		&\sum_{z\in V}  \sum_{\vapp{w}{t'}\in \dirsuccsetdef(v, t)} \pairdepdef(v, t, (v, w, t')) \\
		\overset{\cref{thm:dependencyAccumulationFractionLemma}}{=} & \sum_{z\in V}
		\sum_{\vapp{w}{t'}\in \dirsuccsetdef(v, t)}\frac{\sigmaresdef_{s\vappdef}}{\sigmaresdef_{s\vapp{w}{t'}}}\cdot\frac{\sigmaresdef_{sz}(w, t')}{\sigmares{\walkset}_{sz}} \\
		= & \sum_{\vapp{w}{t'}\in \dirsuccsetdef(v, t)}  \sum_{z\in V}\frac{\sigmaresdef_{s\vappdef}}{\sigmaresdef_{s\vapp{w}{t'}}}\cdot\frac{\sigmaresdef_{sz}(w, t')}{\sigmares{\walkset}_{sz}} \\
		= & \sum_{\vapp{w}{t'}\in \dirsuccsetdef(v, t)}  \frac{\sigmaresdef_{s\vappdef}}{\sigmaresdef_{s\vapp{w}{t'}}} \sum_{z\in V}\frac{\sigmaresdef_{sz}(w, t')}{\sigmares{\walkset}_{sz}} \\
		\overset{\cref{def:tempdep}}{=} & \sum_{\vapp{w}{t'}\in \dirsuccsetdef(v, t)}  \frac{\sigmaresdef_{s\vappdef}}{\sigmaresdef_{s\vapp{w}{t'}}}\cdot\deltapointdef(w, t'),
	\end{align*}
	from which the result immediately follows.
\end{proof}%}

% Using \cref{thm:gendepacc}, in the next subsection we shall prove a running time
% lemma.
%}

% \subsection{Algorithm for temporal betweenness computation}
We now combine \cref{thm:gendepacc} with the results from \cref{sec:walkcounting} to show how betweenness centrality can be computed for all finite \prefcomple{} \critfnstext{}.

\begin{lemma}%[\appref{thm:genbtwcomp}]
\label{thm:genbtwcomp}%\label{thm:generalbtwcomp}
	Let~$\critfn$ be a finite \prefcomple{} \critfntext{} for a temporal graph~$\TG = \temptuple$.
	Given~$\predgraphdef(\critfn)$ for each~$s\in V$, the temporal betweenness centrality of all vertices in~$\TG$ can be computed in~$\bigO(\sum_{s\in V}|\predgraphdef(\critfn)| + nM)$ time.
\end{lemma}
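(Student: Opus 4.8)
The plan is to mimic Brandes' algorithm: for each source $s \in V$ we process the (acyclic, by \cref{thm:finiteiffacyclic}) predecessor graph $\predgraphdef(\critfn)$ in reverse topological order and accumulate the cumulative temporal dependencies $\deltapointdef(v,t)$ using the recursion of \cref{thm:gendepacc}. The key point is that all the ingredients of that recursion can be precomputed cheaply from $\predgraphdef(\critfn)$ alone.

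First I would fix $s$ and explain the per-source computation. By \cref{thm:genwalkcounting}, from $\predgraphdef(\critfn)$ we obtain in $O(|\predgraphdef(\critfn)|)$ time the walk counts $\sigmaresdef_{s\vappdef}$ for every appearance, and hence also $\sigmaresdef_{sz}$ for every target $z$ (summing over the optimal-$\critfn$ appearances of $z$, as in the proof of \cref{thm:genwalkcounting}). Next, the appearance dependencies $\appdepdef$ have to be initialized: for each $z$, the appearances $\vapp{z}{t}$ with $\critfnopt(z,t) = \critfnopt(z)$ each receive a contribution of $\sigmaresdef_{s\vapp{z}{t}}/\sigmaresdef_{sz}$ towards $\deltapointdef(\vapp{z}{t})$ — this is exactly $\pairdep{\walkset}{s}{z}(z,t)$, and summing the right such pairs over all $z$ gives the initial value $\appdepdef$ for every appearance. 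This initialization touches each appearance a constant number of times, hence costs $O(|\predgraphdef(\critfn)|)$. Then I would topologically sort $\predgraphdef(\critfn)$ (linear time, it is a DAG), and process appearances $\vapp{v}{t}$ from sinks to sources: for each arc $(\vappdef,\vapp{w}{t'})$ of $\predgraphdef(\critfn)$ — equivalently $\vapp{w}{t'}\in\dirsuccsetdef(v,t)$ — add $\frac{\sigmaresdef_{s\vappdef}}{\sigmaresdef_{s\vapp{w}{t'}}}\cdot\deltapointdef(w,t')$ to $\deltapointdef(v,t)$, which is valid because $\deltapointdef(w,t')$ is already final when $\vapp{w}{t'}$ is processed before $\vapp{v}{t}$. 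Correctness is immediate from \cref{thm:gendepacc}; the running time of this pass is $O(|\predgraphdef(\critfn)|)$ since each arc is handled once.

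Having the $\deltapointdef(v,t)$ for all appearances, I would collapse to vertices via \cref{thm:depappdepsame}, setting $\deltapointdef(v) = \sum_{t}\deltapointdef(v,t)$ in $O(|\predgraphdef(\critfn)|)$ time, and then accumulate into a global array indexed by $v$: after processing all sources $s$, this array holds $\sum_{s\in V}\deltapointdef(v) = \totbtwsetdef(v)$ by \cref{thm:totbtwcumdep}. Summing these per-source contributions over all $s\in V$ costs $O(\sum_{s\in V}|\predgraphdef(\critfn)|)$ in total. Finally, to convert $\totbtwsetdef(v)$ to the standard temporal betweenness $\btwsetdef(v)$ I would apply \cref{thm:totbetweennesssame}, which requires knowing, for every ordered pair $(v,w)$, whether $\sigmaresdef_{vw}>0$ and whether $\sigmaresdef_{wv}>0$, plus whether $\sigmaresdef_{vv}>0$. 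The bits $\iversonian{\sigmaresdef_{vw}>0}$ for fixed source $v$ are read off $\predgraphdef{}_{v}(\critfn)$ during that source's pass (they are $1$ exactly for the $w$ reachable from $\vapp{v}{0}$ in the predecessor graph, i.e. for which some optimal appearance of $w$ lies in $\predgraphdef{}_{v}(\critfn)$), and recording them for all $v$ costs $O(nM)$ extra — here $n^2$ is dominated by $nM$ when the graph is connected, but to be safe one simply absorbs the $O(n^2)$ correction terms into the stated $O(nM)$ bound (or notes $n \le M+1$ for connected instances; otherwise betweenness of isolated components is trivial). Putting the pieces together yields the claimed $O(\sum_{s\in V}|\predgraphdef(\critfn)| + nM)$ bound.

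The main obstacle, and the step deserving the most care, is the bookkeeping for the $\appdepdef$ initialization and for the reachability bits feeding \cref{thm:totbetweennesssame}: one must be careful that $\appdepdef$ aggregates, over \emph{all} targets $z$, precisely the appearances $\vapp{v}{t}$ that are themselves $\critfn$-optimal endpoints for $z$ (so that $\sigmaresdef_{sz}(z,t)$ in the sense of \cref{def:walkcounts} is accounted for exactly once per $z$), and that dividing by $\sigmaresdef_{sz}$ is well-defined, which holds precisely on the appearances reachable in $\predgraphdef(\critfn)$. Everything else is a routine linear-time DAG traversal justified by \cref{thm:gendepacc}, \cref{thm:genwalkcounting}, \cref{thm:depappdepsame}, \cref{thm:totbtwcumdep}, and \cref{thm:totbetweennesssame}.
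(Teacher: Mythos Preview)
Your proposal is correct and follows essentially the same Brandes-style approach as the paper: accumulate dependencies via \cref{thm:gendepacc} over the acyclic predecessor graphs in reverse topological order, then convert total betweenness to standard betweenness via \cref{thm:totbetweennesssame}. The only cosmetic difference is that the paper's \cref{alg:metabrandes} skips adding the $\appdepdef$ terms directly into the betweenness array (it adds them only to $\deltapointdef$), which spares it the $\iversonian{\sigmaresdef_{wv}>0}$ half of the connectivity correction you perform at the end.
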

%\appendixproof{thm:genbtwcomp}
%{
\begin{proof}
	
	Consider \cref{alg:metabrandes}. 
% 	We assume that the function~\computepredsfn{} that computes the predecessor sets in time~$f(\TG)$ is given to us. 
	\countwalksfn{} counts optimal temporal walks in a manner described by \cref{thm:genwalkcounting}. 
	We first prove the correctness of the algorithm before analyzing its running time.
	Denote with~$\walkset$ the set of optimal temporal walks in~$\TG$ induced by~$\critfn$.

\begin{algorithm}[t]
	\caption{
		General betweenness algorithm, see \cref{thm:genbtwcomp}. 
		It uses an auxiliary function \countwalksfn{} that computes the walk counts in a manner described by \cref{thm:genwalkcounting}.
		Note that the predecessor sets~$\dirpredsetdef$ are encoded in the predecessor graphs~$\predgraphdef(\critfn)$.
	}\label{alg:metabrandes}
	\begin{algorithmic}[1]
		\Input{A temporal graph $\TG = (V,\TE,T)$, the predecessor graphs~$\predgraphdef(\critfn)$ for all~$s\in V$.}
		\Output{Betweenness $\btwsetdef(v)$ of all vertices $v \in V(\TG)$.}
% 		\State $\dirpredsetdef \gets \computepredsfn{}$ \Comment{Effectively~$n$ predecessor graphs}
		\State $\sigmaresdef_{sv}, \sigmaresdef_{s(v, t)}, \appdepdef \gets \countwalksfn(\predgraphdef(\critfn))$ \Comment{\cref{thm:genwalkcounting}}
		\For{$v \in V$}
		\State $\btwsetdef[v] \gets 1$ \Comment{Initialize to $1$ per \cref{thm:totbetweennesssame}} \label{algln:btwinit}
		\EndFor
		\For{$s \in V$} \label{algln:mainloop}
		\For{$(u, v, t) \in \TE$} \label{algln:depinit}
		\State $\deltapointdef[v, t] \gets 0$ \Comment{Reset the array}
		\EndFor
		\For{$\vapp{w}{t'}\in\relverarr$ in topological order determined by~$\dirpredsetdef$}\label{algln:accloop}
		\State $\deltapointdef(w, t') \gets \deltapointdef(w, t') + \pairdep{\pathset}{s}{w}(w, t')$ \Comment{\Appdeptext{} on~$\vapp{w}{t'}$}
		\For{$(v,t) \in \dirpredsetdef(w,t')$}
		\State $\deltapointdef[v,t] \gets \deltapointdef[v,t] 
		+ \frac{\sigmaresdef_{s(v, t)}}{\sigmaresdef_{s(w, t')}}\cdot \deltapointdef[w,t']$ \Comment{Sum of \cref{thm:gendepacc}}
		\State $\btwsetdef[v] \gets \btwsetdef[v] + \frac{\sigmaresdef_{s(v, t)}}{\sigmaresdef_{s(w, t')}}\cdot \deltapointdef[w,t']$ \label{algln:btwacc}
		\EndFor
		\EndFor
		\State $\btwsetdef[s] \gets \btwsetdef[s] - \lvert\{v\mid\exists t\in\natinterval{T} \colon \pairdep{\pathset}{s}{v}(v, t) > 0\}\rvert$ \Comment{Connectivity correction} \label{algln:conncorrection}
		\EndFor
		\State \Return $\btwsetdef$
	\end{algorithmic}
\end{algorithm}

	\smallskip
	
	\noindent\emph{Correctness:} The general idea of the algorithm is to use
	\cref{thm:gendepacc} to implicitly compute the \totbtwtext{} and use
	\cref{thm:totbetweennesssame} to recover the $\btwsetdef$ values. As the
	equation in \cref{thm:totbetweennesssame} has the constant
	summand~$+1$, in \cref{algln:btwinit} we initialize the temporal betweenness
	values to~$1$.
	
	The next step is to compute the \cumdeps{}~$\deltapointdef(v, t)$ for each
	source vertex~$s$ and appropriately update the temporal betweenness values. We
	do this in the loop starting on \cref{algln:mainloop}. We first initialize the array
	holding the \cumdeps{} on \cref{algln:depinit}.
	
	Finally, in the loop starting on \cref{algln:accloop} we compute the
	\cumdeps{} using the recursive formula of \cref{thm:gendepacc}. We proceed in
	reverse topological order, that is, we start with vertices that have no
	successors (and hence for which the equation in \cref{thm:gendepacc} is
	trivial to evaluate) as our base case and then proceed backwards. This is
	possible as finite \prefcomple{} \critfnstext{} always lead to acyclic
	predecessor graphs (see \cref{thm:finiteiffacyclic}).
	
	Finally, on \cref{algln:conncorrection} we apply a ``connectivity
	correction.'' This term corresponds to the $\sum_{w \in V}
	\iversonian{\sigmaresdef_{vw} > 0}$ part of the equation in
	\cref{thm:totbetweennesssame}. Notice that we do not have to handle the term
	with~$\iversonian{\sigmaresdef_{wv} > 0}$ as on \cref{algln:btwacc} we only
	ever add terms which correspond the sum of the equation of
	\cref{thm:gendepacc}, and never the~$\appdepdef$ terms. (We do, however, add
	those terms to the~$\deltapointdef(v, t)$ values which then propagate into
	the~$\btwsetdef$ array, which is why we need the correction on
	\cref{algln:conncorrection}.)
	
	\smallskip
	
	\noindent\emph{Running time:} 
% 	\computepredsfn{} takes~$f(\TG)$ by assumption and so~$\countwalksfn{}$ takes the same amount of time.
	By \cref{thm:genwalkcounting}, $\countwalksfn{}$ can be computed in $O(|\predgraphdef(\critfn)|)$ time for each~$s \in V$.
	We can easily see that for the loop on \cref{algln:depinit} we need~$\bigO(M)$ time overall.
	
	Before the execution of the for-loop on \cref{algln:accloop}, we may first
	need to compute the topological order on the predecessor graph. This can be
	done in time linear in the size of the graph. The sum of the sizes of the
	graphs for all sources is~$\sum_{s\in V}|\predgraphdef(\critfn)|$. We also note that
	the for-loop on \cref{algln:accloop} goes over each arc of the predecessor
	graph exactly once, so we get an~$\bigO(\sum_{s\in V}|\predgraphdef(\critfn)|)$ bound for the overall execution time.
	
	Finally, with some bookkeeping done in the loop of \cref{algln:accloop},  the
	connectivity correction of \cref{algln:conncorrection} can easily be made to
	run in constant time.
	
	Altogether, we get $\bigO(\sum_{s\in V}|\predgraphdef(\critfn)| + nM)$ for the total running time.
\end{proof}%}

Using the running time bound from \cref{thm:construct-pred-graph} together with \cref{thm:genbtwcomp}, we immediately get our main result of this work.
\begin{theorem}[General betweenness computation]\label{thm:fordbtw} 
	Let~$\critfn$ be a finite \prefcomple{} \critfntext{}. 
	Then the betweenness centrality of all vertices can be computed in~$\bigO(n^2MT^2)$ time.
\end{theorem}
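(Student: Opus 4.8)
The plan is to obtain the theorem as a straightforward composition of the two algorithmic building blocks established above. First, for each source vertex~$s \in V$ I would run the generalized Bellman--Ford procedure of \cref{alg:bellmanford} to build the predecessor graph~$\predgraphdef(\critfn)$ together with the optimal values~$\critfnopt(v,t)$. By \cref{thm:construct-pred-graph} this takes $\bigO(nM\timespan^2)$ time per source, hence $\bigO(n^2 M \timespan^2)$ time in total over all~$n$ sources. Along the way one also detects finiteness via \cref{thm:finiteiffacyclic}: it suffices to test each~$\predgraphdef(\critfn)$ for acyclicity, which is subsumed by the SCC computation inside \countwalksfn{} and costs only linear time; since~$\critfn$ is assumed finite, this merely certifies the hypothesis and incurs no extra asymptotic cost.

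Second, I would hand the family~$\{\predgraphdef(\critfn)\}_{s\in V}$ to \cref{alg:metabrandes} and apply \cref{thm:genbtwcomp}, which computes~$\btwsetdef(v)$ for all~$v\in V$ in $\bigO(\sum_{s\in V}|\predgraphdef(\critfn)| + nM)$ time. To turn this into an absolute bound I need a size estimate for~$\predgraphdef(\critfn)$: its vertex set is contained in~$\{\vappsource\}\cup(V\times[\timespan])$, so it has $\bigO(n\timespan)$ vertices, and every arc of it stems from a time arc~$(v,w,t')\in\TE$ paired with a choice of~$t\le t'$ (plus at most one arc out of~$\vappsource$ per time arc leaving~$s$), so it has $\bigO(M\timespan)$ arcs. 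Hence $|\predgraphdef(\critfn)| = \bigO(n\timespan + M\timespan)$, and summing over the~$n$ sources gives $\bigO(n^2\timespan + nM\timespan)$, which together with the~$\bigO(nM)$ term is dominated by the~$\bigO(n^2 M \timespan^2)$ cost of the first phase. Adding the two phases therefore yields the claimed $\bigO(n^2 M \timespan^2)$ bound.

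I do not expect a genuine obstacle here: the statement is essentially a bookkeeping corollary of \cref{thm:construct-pred-graph,thm:genbtwcomp}. The only point requiring a little care is the size estimate $|\predgraphdef(\critfn)| = \bigO(M\timespan)$ and the check that it is indeed dominated by the predecessor-graph construction time, so that the final running time comes out clean; this is routine arithmetic once the per-source bounds are in place.
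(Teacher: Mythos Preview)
Your proposal is correct and follows exactly the paper's approach: the paper states the theorem as an immediate consequence of \cref{thm:construct-pred-graph} and \cref{thm:genbtwcomp}, and you spell out precisely this composition together with the (routine) size estimate $|\predgraphdef(\critfn)|=\bigO(nT+MT)$ showing that the second phase is dominated by the first.
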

Combining \cref{prop:exps,thm:fordbtw} yields the following result. 
Note that for all of those optimality concepts \cref{alg:bellmanford} can easily be implemented so that~$c$ can evaluated in amortized constant time, hence our results apply directly without any additional multiplicative factors in the running time.
\begin{corollary}
	The betweenness centrality of all vertices in a temporal graph can be computed
	in~$\bigO(n^2MT^2)$ time with respect to:% shortest fastest temporal paths.
		\begin{itemize}
  \item foremost temporal walk,
  \item shortest temporal path,
  \item shortest fastest temporal path, 
  \item shortest restless temporal walks, and
  \item strict prefix-foremost temporal path.
\end{itemize}
\end{corollary}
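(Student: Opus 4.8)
The plan is to derive the corollary directly from \cref{prop:exps} and \cref{thm:fordbtw}: by \cref{prop:exps} each of the five listed optimality concepts is the induced set of optimal temporal walks of some \prefcomple{} \critfntext{}~$\critfn$, and by \cref{thm:fordbtw} the temporal betweenness centrality of all vertices can be computed in~$\bigO(n^2MT^2)$ time whenever such a~$\critfn$ is finite. The argument therefore reduces to two bookkeeping points: verifying finiteness of~$\critfn$, and verifying that evaluating~$\critfn$ inside the algorithm does not inflate the running time.

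For finiteness I would proceed case by case. For shortest temporal paths, shortest fastest temporal paths, and strict prefix-foremost temporal paths, every \copt{} walk is in fact a temporal path (as observed in the proof of \cref{prop:exps}), and a temporal graph has only finitely many temporal paths, so~$\critfn$ is finite. For shortest restless temporal walks, the finite values of~$\critfn$ count transitions, so every optimal \szwalk{} has exactly~$\critfnopt(z)$ transitions, and since a finite temporal graph has only finitely many walks with a prescribed number of transitions,~$\critfn$ is again finite. For foremost temporal walks, finiteness need not hold in general --- a foremost walk can be lengthened by repeatedly looping around a cycle all of whose arcs carry the same time label --- but by \cref{thm:finiteiffacyclic} this is recognised in polynomial time as a directed cycle in~$\predgraphdef(\critfn)$; in that case temporal betweenness is not well-defined, and otherwise \cref{thm:fordbtw} applies as stated.

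For the running time I would check that, inside \cref{alg:bellmanford}, each \textsc{Relax} call evaluates~$\critfn$ on a one-transition extension of a stored \swalk{$s$}{$\vappdef$} in amortized~$\bigO(1)$ time, so that the bound of \cref{thm:construct-pred-graph}, and hence of \cref{thm:fordbtw}, is not burdened by an extra multiplicative factor. This is immediate once one carries the appropriate bit of state along each stored prefix: its arrival time (for foremost walks), its length (for shortest and shortest restless walks, with one extra comparison of consecutive time labels in the restless case), its pair of duration and length (for shortest fastest walks), or its arrival time together with a flag recording whether it is strict and prefix-foremost (for strict prefix-foremost paths). Each such quantity is updated in constant time under a single relaxation.

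The corollary is thus an immediate consequence of \cref{prop:exps} and \cref{thm:fordbtw}, and I do not expect any substantial obstacle. The only point that needs care is the finiteness hypothesis of \cref{thm:fordbtw}: it holds automatically for the four path- and restless-walk concepts, and for foremost temporal walks it is exactly the standing assumption under which temporal betweenness centrality is defined --- which, by \cref{thm:finiteiffacyclic}, is efficiently checkable.
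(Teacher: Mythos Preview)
Your proposal is correct and follows the same route as the paper: the corollary is obtained by combining \cref{prop:exps} with \cref{thm:fordbtw}, together with the observation that the relevant \critfnstext{} can be evaluated in amortized constant time inside \cref{alg:bellmanford}. Your treatment is in fact more explicit than the paper's, which leaves the finiteness hypothesis of \cref{thm:fordbtw} implicit (relying on the standing assumption of \cref{sec:tempbet}), whereas you verify it case by case and correctly flag that foremost temporal walks require the detectable acyclicity condition of \cref{thm:finiteiffacyclic}.
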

We remark that, while foremost temporal walk, shortest temporal path, and strict prefix-foremost temporal path were known from previous work \cite{BMNR20,afrasiabi17,kim_temporal_2012}, this is a new classification for shortest fastest temporal paths and shortest restless temporal walks.

\section{Conclusion}
The very nature of this work is conceptual. It goes without saying
that to achieve improved efficiency, exploiting specific properties 
of the various temporal path and walk concepts may clearly 
allow for further improved polynomial running times. 
As to future research, we wonder whether our concept of prefix-compatibility 
may finally lead to a full characterization of polynomial-time 
computable temporal betweenness centrality values.
As to the computationally hard cases (but not only them),
for high efficiency in practice, one might also explore the possibilities of
efficient data reductions or approximation algorithms. 
This proved useful in the statci graphs case, with respect to 
data reduction~\cite{BGPL12,BDKNN20,PEZDB14,SKSC17} as well as with respect to approximation~\cite{BKMM07,GSS08,RK16}

\bibliographystyle{abbrvnat}
\bibliography{bibliography}

\begin{thebibliography}{30}
\providecommand{\natexlab}[1]{#1}
\providecommand{\url}[1]{\texttt{#1}}
\expandafter\ifx\csname urlstyle\endcsname\relax
  \providecommand{\doi}[1]{doi: #1}\else
  \providecommand{\doi}{doi: \begingroup \urlstyle{rm}\Url}\fi

\bibitem[Afrasiabi~Rad et~al.(2017)Afrasiabi~Rad, Flocchini, and
  Gaudet]{afrasiabi17}
A.~Afrasiabi~Rad, P.~Flocchini, and J.~Gaudet.
\newblock Computation and analysis of temporal betweenness in a knowledge
  mobilization network.
\newblock \emph{Computational Social Networks}, 4\penalty0 (1):\penalty0 5,
  2017.

\bibitem[Aho et~al.(1983)Aho, Hopcroft, and Ullman]{AHU83}
A.~V. Aho, J.~E. Hopcroft, and J.~D. Ullman.
\newblock \emph{Data Structures and Algorithms}.
\newblock Addison-Wesley, 1983.

\bibitem[Alsayed and Higham(2015)]{alsayed2015betweenness}
A.~Alsayed and D.~J. Higham.
\newblock Betweenness in time dependent networks.
\newblock \emph{Chaos, Solitons \& Fractals}, 72:\penalty0 35--48, 2015.

\bibitem[Bader et~al.(2007)Bader, Kintali, Madduri, and Mihail]{BKMM07}
D.~A. Bader, S.~Kintali, K.~Madduri, and M.~Mihail.
\newblock Approximating betweenness centrality.
\newblock In \emph{Proceedings of the 5th International Workshop on Algorithms
  and Models for the Web-Graph ({WAW}~'07)}, pages 124--137. Springer, 2007.

\bibitem[Baglioni et~al.(2012)Baglioni, Geraci, Pellegrini, and
  Lastres]{BGPL12}
M.~Baglioni, F.~Geraci, M.~Pellegrini, and E.~Lastres.
\newblock Fast exact computation of betweenness centrality in social networks.
\newblock In \emph{Proceedings of the 4th International Conference on Advances
  in Social Networks Analysis and Mining ({ASONAM}~'12)}, pages 450--456.
  {IEEE} Computer Society, 2012.

\bibitem[Bellman(1958)]{bellman1958routing}
R.~Bellman.
\newblock On a routing problem.
\newblock \emph{Quarterly of applied mathematics}, 16\penalty0 (1):\penalty0
  87--90, 1958.

\bibitem[{Bentert} et~al.(2020){Bentert}, {Dittmann}, {Kellerhals},
  {Nichterlein}, and {Niedermeier}]{BDKNN20}
M.~{Bentert}, A.~{Dittmann}, L.~{Kellerhals}, A.~{Nichterlein}, and
  R.~{Niedermeier}.
\newblock An adaptive version of {B}randes' algorithm for betweenness
  centrality.
\newblock \emph{Journal of Graph Algorithms and Applications}, 24\penalty0
  (3):\penalty0 483--522, 2020.

\bibitem[Bentert et~al.(2020)Bentert, Himmel, Nichterlein, and
  Niedermeier]{himmel_efficient_2020}
M.~Bentert, A.-S. Himmel, A.~Nichterlein, and R.~Niedermeier.
\newblock Efficient computation of optimal temporal walks under waiting-time
  constraints.
\newblock \emph{Applied Network Science}, 5\penalty0 (1):\penalty0 73, 2020.

\bibitem[Brandes(2001)]{Bra01}
U.~Brandes.
\newblock A faster algorithm for betweenness centrality.
\newblock \emph{Journal of Mathematical Sociology}, 25\penalty0 (2):\penalty0
  163--177, 2001.

\bibitem[Bui-Xuan et~al.(2003)Bui-Xuan, Ferreira, and
  Jarry]{xuan_computing_2003}
B.-M. Bui-Xuan, A.~Ferreira, and A.~Jarry.
\newblock Computing shortest, fastest, and foremost journeys in dynamic
  networks.
\newblock \emph{International Journal of Foundations of Computer Science},
  14\penalty0 (02):\penalty0 267--285, 2003.

\bibitem[Bu{\ss} et~al.(2020)Bu{\ss}, Molter, Niedermeier, and Rymar]{BMNR20}
S.~Bu{\ss}, H.~Molter, R.~Niedermeier, and M.~Rymar.
\newblock Algorithmic aspects of temporal betweenness.
\newblock In \emph{Proceedings of the 26th ACM SIGKDD International Conference
  on Knowledge Discovery \& Data Mining (KDD '20)}, pages 2084--2092.
  Association for Computing Machinery, 2020.

\bibitem[Casteigts et~al.(2020)Casteigts, Himmel, Molter, and Zschoche]{HMZ20}
A.~Casteigts, A.-S. Himmel, H.~Molter, and P.~Zschoche.
\newblock The computational complexity of finding temporal paths under waiting
  time constraints.
\newblock In \emph{Proceedings of the 31st International Symposium on
  Algorithms and Computation ({ISAAC} '20)}, volume 181 of \emph{LIPIcs}, pages
  30:1--30:18. Schloss Dagstuhl--Leibniz-Zentrum f{\"u}r Informatik, 2020.

\bibitem[Ford~Jr(1956)]{ford1956network}
L.~R. Ford~Jr.
\newblock Network flow theory.
\newblock Technical report, Rand Corp Santa Monica Ca, 1956.

\bibitem[Freeman(1977)]{freeman_set_1977}
L.~C. Freeman.
\newblock A set of measures of centrality based on betweenness.
\newblock \emph{Sociometry}, 40\penalty0 (1):\penalty0 35--41, 1977.

\bibitem[Geisberger et~al.(2008)Geisberger, Sanders, and Schultes]{GSS08}
R.~Geisberger, P.~Sanders, and D.~Schultes.
\newblock Better approximation of betweenness centrality.
\newblock In \emph{Proceedings of the 10th Meeting on Algorithm Engineering \&
  Expermiments ({ALENEX}~'08)}, pages 90--100. SIAM, 2008.

\bibitem[Gunturi et~al.(2017)Gunturi, Shekhar, Joseph, and
  Carley]{gunturi2017scalable}
V.~M. Gunturi, S.~Shekhar, K.~Joseph, and K.~M. Carley.
\newblock Scalable computational techniques for centrality metrics on
  temporally detailed social network.
\newblock \emph{Machine Learning}, 106\penalty0 (8):\penalty0 1133--1169, 2017.

\bibitem[Habiba et~al.(2007)Habiba, Tantipathananandh, and
  Berger-Wolf]{habiba2007betweenness}
Habiba, C.~Tantipathananandh, and T.~Y. Berger-Wolf.
\newblock Betweenness centrality measure in dynamic networks.
\newblock Technical Report~19, Department of Computer Science, University of
  Illinois at Chicago, Chicago, 2007.
\newblock DIMACS Technical Report.

\bibitem[Kempe et~al.(2002)Kempe, Kleinberg, and
  Kumar]{kempe_connectivity_2002}
D.~Kempe, J.~Kleinberg, and A.~Kumar.
\newblock Connectivity and inference problems for temporal networks.
\newblock \emph{Journal of Computer and System Sciences}, 64\penalty0
  (4):\penalty0 820--842, 2002.

\bibitem[Kim and Anderson(2012)]{kim_temporal_2012}
H.~Kim and R.~Anderson.
\newblock Temporal node centrality in complex networks.
\newblock \emph{Physical Review E}, 85\penalty0 (2):\penalty0 026107, 2012.

\bibitem[Puzis et~al.(2015)Puzis, Elovici, Zilberman, Dolev, and
  Brandes]{PEZDB14}
R.~Puzis, Y.~Elovici, P.~Zilberman, S.~Dolev, and U.~Brandes.
\newblock Topology manipulations for speeding betweenness centrality
  computation.
\newblock \emph{Journal of Complex Networks}, 3\penalty0 (1):\penalty0 84--112,
  2015.

\bibitem[Riondato and Kornaropoulos(2016)]{RK16}
M.~Riondato and E.~M. Kornaropoulos.
\newblock Fast approximation of betweenness centrality through sampling.
\newblock \emph{Data Mining and Knowledge Discovery}, 30\penalty0 (2):\penalty0
  438--475, 2016.

\bibitem[Rymar et~al.(2021)Rymar, Molter, Nicherlein, and Niedermeier]{RMNN21}
M.~Rymar, H.~Molter, A.~Nicherlein, and R.~Niedermeier.
\newblock Towards classifying the polynomial-time solvability of temporal
  betweenness centrality.
\newblock In \emph{Proceedings of the 47th International Workshop on
  Graph-Theoretic Concepts in Computer Science (WG '21)}. Springer, 2021.
\newblock Accepted for publication.

\bibitem[Sariy{\"{u}}ce et~al.(2017)Sariy{\"{u}}ce, Kaya, Saule, and
  {\c{C}}ataly{\"{u}}rek]{SKSC17}
A.~E. Sariy{\"{u}}ce, K.~Kaya, E.~Saule, and {\"{U}}.~V.
  {\c{C}}ataly{\"{u}}rek.
\newblock Graph manipulations for fast centrality computation.
\newblock \emph{ACM Transactions on Knowledge Discovery from Data}, 11\penalty0
  (3):\penalty0 26:1--26:25, 2017.

\bibitem[Simard et~al.(2021)Simard, Magnien, and Latapy]{SML21}
F.~Simard, C.~Magnien, and M.~Latapy.
\newblock Computing betweenness centrality in link streams.
\newblock \emph{CoRR}, abs/2102.06543, 2021.
\newblock URL \url{https://arxiv.org/abs/2102.06543}.

\bibitem[Tang et~al.(2010)Tang, Musolesi, Mascolo, Latora, and Nicosia]{tang10}
J.~Tang, M.~Musolesi, C.~Mascolo, V.~Latora, and V.~Nicosia.
\newblock Analysing information flows and key mediators through temporal
  centrality metrics.
\newblock In \emph{Proceedings of the 3rd Workshop on Social Network Systems
  (SNS '10)}. Association for Computing Machinery, 2010.

\bibitem[Tsalouchidou et~al.(2020)Tsalouchidou, Baeza-Yates, Bonchi, Liao, and
  Sellis]{tsalouchidou_2020}
I.~Tsalouchidou, R.~Baeza-Yates, F.~Bonchi, K.~Liao, and T.~Sellis.
\newblock Temporal betweenness centrality in dynamic graphs.
\newblock \emph{International Journal of Data Science and Analytics},
  9\penalty0 (3):\penalty0 257--272, 2020.

\bibitem[Valiant(1979)]{valiant_complexity_1979}
L.~G. Valiant.
\newblock The complexity of enumeration and reliability problems.
\newblock \emph{SIAM Journal on Computing}, 8\penalty0 (3):\penalty0 410--421,
  1979.

\bibitem[Williams and Musolesi(2016)]{williams16}
M.~J. Williams and M.~Musolesi.
\newblock Spatio-temporal networks: reachability, centrality and robustness.
\newblock \emph{Royal Society Open Science}, 3\penalty0 (6), 2016.

\bibitem[Wu et~al.(2016)Wu, Cheng, Ke, Huang, Huang, and Wu]{wu_efficient_2016}
H.~Wu, J.~Cheng, Y.~Ke, S.~Huang, Y.~Huang, and H.~Wu.
\newblock Efficient algorithms for temporal path computation.
\newblock \emph{IEEE Transactions on Knowledge and Data Engineering},
  28\penalty0 (11):\penalty0 2927--2942, 2016.

\bibitem[Zschoche et~al.(2020)Zschoche, Fluschnik, Molter, and
  Niedermeier]{Zsc+20}
P.~Zschoche, T.~Fluschnik, H.~Molter, and R.~Niedermeier.
\newblock The complexity of finding separators in temporal graphs.
\newblock \emph{Journal of Computer and System Sciences}, 107:\penalty0 72--92,
  2020.

\end{thebibliography}

%\clearpage
%\appendix
%
%\appendixProofs

\end{document}